\documentclass[preprint,citeautoscript,superscriptaddress,amsmath,amssymb,10pt,twocolumn]{revtex4-2}

\newcommand{\ket}[1]{\left|#1\right\rangle}
\newcommand{\bra}[1]{\left\langle#1\right|}

\usepackage{graphicx}
\usepackage{color}
\usepackage{upgreek}
\usepackage{float}
\usepackage{lipsum}
\usepackage{mathrsfs}
\usepackage{booktabs}
\usepackage{url}
\usepackage{xcolor}
\usepackage[colorlinks,citecolor=teal,linkcolor=teal,urlcolor=teal]{hyperref}
\usepackage{times}

\usepackage{wasysym} % for \diameter
\usepackage{enumerate}
\usepackage{stmaryrd} % for QECC brackets
\usepackage{ragged2e} % for justifying in figure captions
\usepackage{bbm}

\newcommand{\FF}{\mathbb{F}}

\newcommand{\RR}{\mathbb{R}}
\newcommand{\CC}{\mathbb{C}}

\newcommand{\Tr}{\text{Tr}}

\newcommand{\wt}{\mathrm{wt}}
\newcommand{\imag}{\texttt{i}}

\newcommand{\add}{}

\usepackage{amsthm}

\newtheorem{theorem}{Theorem}
\newtheorem{corollary}[theorem]{Corollary}

\newcommand{\fu}{Dahlem Center for Complex Quantum Systems, Freie Universit{\"a}t Berlin, 14195 Berlin, Germany}
\newcommand{\fzj}{Institute for Theoretical Nanoelectronics (PGI-2), Forschungszentrum J\"ulich, 52428 J\"ulich, Germany}
\newcommand{\HZB}{Helmholtz-Zentrum Berlin für Materialien und Energie, Hahn-Meitner-Platz 1, 14109 Berlin, Germany}

\begin{document}

\author{Daniel Miller}
\affiliation{\fu}
\affiliation{\fzj}
 
\author{Jens Eisert}
\affiliation{\fu}
\affiliation{\HZB}

\title{Detecting entanglement from few partial transpose moments\\ and their decay via weight enumerators}

\date{\today}

\begin{abstract}
In light of the rapid experimental development of quantum devices, tools for estimating quantum properties, particularly entanglement, must be improved to keep pace.
The $p_3$-PPT criterion is an experimentally viable relaxation of the well-known positive partial transposition (PPT) criterion for the certification of quantum entanglement.
Recently, it has been generalized to various families of entanglement criteria based on the PT moments $p_k=\Tr[(\rho^\Gamma)^k]$,
where $\rho^\Gamma$ denotes the partially transposed density matrix of a quantum state $\rho$.
While most of these generalizations are strictly more powerful than the $p_3$-PPT criterion, their $m$-th level versions usually rely on the availability of $p_k$ for all moment orders $k\le m$.
Here, we show that one can alternatively compare any three PT moments of   orders $k < l < m$, 
which can significantly reduce experimental overheads.
More precisely, we show that any state satisfying 
$p_{l} > p_k ^{x} p_m^{1-x} $ must be entangled, where $x=(m-l)/(m-k)$.
Using the example of locally depolarized Greenberger-Horne-Zeilinger (GHZ) states, 
a family of states that has been used as a test
for the quantum character of large-scale states prepared on quantum
computers, we identify the most promising versions of these three-moment criteria and compare their performance with a broad range of  entanglement criteria.
In the case of globally depolarized stabilizer states, we prove that having access to $p_k$ for $k \le 5$ is sufficient to reproduce the full PPT criterion.
More generally, we show that the Stieltjes-$m$ criterion is as powerful as the PPT criterion whenever $\rho^\Gamma$ has no more than $(m+1)/2$ distinct eigenvalues.
Finally, we introduce a notion of quantum weight enumerators that capture the decay of $p_k$ under local white noise for arbitrary quantum states and illustrate this concept
for an absolutely maximally entangled (AME) state.
Our results contribute to the growing body of literature on higher-moment PPT relaxations and modern applications of weight enumerators in quantum error correction and information theory.
\end{abstract}

\maketitle

\section{Introduction}

Once largely theoretical, ideas from quantum information science now underpin large-scale experimental quantum computing platforms~~\cite{arute_quantum_supremacy_2019, kim_evidence_for_2023, bluvstein_logical_quantum_2024, acharya_quantum_error_2025, MindTheGaps}. 
Quantum computing is no longer a distant vision but is now closely linked to rapidly advancing experimental systems.
As a result, the demands for certification, benchmarking, and the detection of quantum properties~\cite{eisert_quantum_certification_2020, PRXQuantum.2.010201} have become substantially more stringent. 
Greater emphasis is now placed on the robustness of methods and their experimental feasibility under realistic data conditions, since data acquisition can be involved, costly, and complex.
In this context, the present work is situated.
We improve a known family of entanglement detection methods, enhance their efficiency, and connect them to powerful concepts from quantum error correction~\cite{Roads}.

%Entanglement is the fundamental \dm{fundamental 3 times in 3 sentences...} structural element of quantum information science.
Entanglement is a ubiquitous resource in quantum technology.
It underlies every advantage that quantum systems have over classical ones in information-processing tasks and is what fundamentally distinguishes quantum theory from classical statistical theory.
For this reason, researchers in the early days of the field developed methods to detect entanglement from experimental data.
The first such tests were based on violations of Bell inequalities~\cite{freedman_experimental_test_1972,  aspect_experimental_realization_1982, weihs_violation_of_1998,
hensen_loophole_free_2015, storz_loophole_free_2023}.
While Bell inequality violations are fundamental in ruling out the possibility that nature is governed by a hidden-variable model,
they only enable the detection of so-called {Bell-nonlocal} states.
However, there exist states that are entangled, yet do not exhibit Bell nonlocality. 
A well-known example is the two-qubit Werner state
\begin{align}
    \rho_\text{Werner}(\varepsilon) = (1-\varepsilon)\ket{\Psi^-} \!\bra{\Psi^- } + \varepsilon \frac{\mathbbm1}{4} \, ,
\end{align}
where $\ket{\Psi^-} =  (\ket{0,1}-\ket{1,0}) / \sqrt{2}$ denotes the singlet~\cite{werner_quantum_states_1989}.
The Werner state is entangled for all $\varepsilon< 2/3$,
but only for $\varepsilon < 1 - 1/\sqrt{2}$ it violates a Bell inequality~\cite{horodecki_violating_bell_1995}.
% That said, using violations of Bell inequalities is often not very practical.
A concept that is much more amenable to realistic experimental data is that of a linear entanglement witness \cite{terhal_detecting_quantum_2002, guehne_detection_of_2002, guehne_entanglement_detection_2009,Hangleiter}.
It is able to directly detect entanglement, including 
the entanglement in quantum states that do not exhibit Bell nonlocality.
In particular, if the fidelity with respect to an entangled stabilizer state exceeds 50\,\%, this certifies the presence of entanglement. 
Fidelity experiments have been successfully implemented across various platforms~\cite{bourennane_experimental_detection_2004, leibfried_creation_of_2005, 
huang_experimental_generation_2011, zhang_experimental_greenberger_2015, wang_18_qubit_2018, omran_generation_of_2019, mooney_generation_and_2021, thomas_efficient_generation_2022, cao_generation_of_2023, moses_a_race_2023, kam_characterization_of_2024, javadiabhari_big_cats_2025}.
A limitation of fidelity-based approaches, however, is that they rely on prior knowledge of the target state. 
If such information is not applicable, 
there exist multiple alternatives.
For example, if the purity, $\Tr[\rho^2]$, of a quantum state $\rho$ exceeds that of any of its marginals, $\rho_A =\Tr_B[\rho]$, then $\rho$ must be entangled across the bipartition $A\vert B$.
While there are fundamental scalability bottlenecks for learning $\Tr[\rho^2] $ in the single-copy setting~\cite{chen_exponential_separations_2022},
it can be efficiently estimated by preparing two copies of $\rho$ on the same quantum device,
followed by pairwise Bell measurements on corresponding qubits.
Also such purity-based entanglement detection has been experimentally implemented across various platforms~\cite{bovino_direct_measurement_2005, islam_measuring_entanglement_2015, kaufmann_quantum_thermalization_2016, bluvstein_a_quantum_2022, miller_experimental_measurement_2024}.
By applying a different postprocessing procedure to the raw data obtained in a two-copy purity measurement, one can also efficiently lower bound the state’s concurrence, an important multipartite entanglement measure~\cite{carvalho_decoherence_and_2004, mintert_concurrence_of_2005, mintert_observable_entanglement_2007, aolita_scalable_method_2008}.

A more practical test for entanglement---the one upon which the present work is built---is provided by the
powerful
%However, many states cannot be detected by either of the above criteria, yet are nevertheless entangled.
%An extremely powerful concept is the 
\emph{positive partial transposition} (PPT) criterion~\cite{peres_separability_criterion_1996,
horodecki_separability_1996, rains_a_semidefinite_2001,
plenio_logarithmic_negativity_2005}.
The partial transposition of a state $\rho$ acting on a bipartite Hilbert space $\mathcal{H}_A\otimes \mathcal{H}_B$ is denoted by $\rho^{\Gamma}$ and defined via
\begin{align}  \label{eq:partial_transpose_definition}
\bra{a,b}\rho^{\Gamma}\ket{a',b'}
=
\bra{a',b} \rho\ket{a,b'} \, ,
\end{align}
where ${\ket{a}},{\ket{a'}}$ are basis vectors of $\mathcal{H}_A$ and
${\ket{b}},{\ket{b'}}$ are basis vectors of $\mathcal{H}_B$.
While $\rho^\Gamma$ depends on the choice of basis,
we are usually only interested in the basis-independent spectrum of eigenvalues of the operator $\rho^\Gamma$. 
The PPT criterion asserts that the existence of a negative eigenvalue of $\rho^\Gamma$ implies that $\rho$ is entangled across the bipartition $A\vert B$.
If this is the case, we say that $\rho$ is \emph{negative partial transpose} (NPT) entangled across $A\vert B$.
However, computing the full spectrum of $\rho^\Gamma$ requires solving an eigenvalue problem whose dimension grows exponentially with the number of qubits, and since the full density matrix 
$\rho$ is typically not accessible in experiments, constructing and diagonalizing 
$\rho^\Gamma$ becomes impractical beyond a few qubits.

To nevertheless somewhat exploit the power of the PPT criterion,   recent approaches have introduced more practical entanglement detection methods based on the so-called PT moments,
\begin{align}
    p_k = \Tr\!\left[ (\rho^\Gamma)^k \right] \, ,
\end{align}
of the partially transposed state, where $k\in \mathbb{N}$~\cite{zhou_single_copies_2020, elben_mixed_state_2020, yu_optimal_entanglement_2021, neven_symmetry_resolved_2021, liu_detecting_entanglement_2022, ali_partial_transpose_2023, carrasco_entanglement_phase_2024, tarabunga_quantifying_mixed_2025, bradshaw_a_closed_2025}.
These moments can be efficiently estimated by measuring the multicopy state $\rho^{\otimes k}$ using the $k$-cycle test~\cite{horodecki_method_for_2002, ekert_direct_estimation_2002, leifer_measuring_polynomial_2004, 
carteret_noiseless_quantum_2005,
subasi_entanglement_spectroscopy_2019, oszmaniec_measuring_relational_2024, quek_multivariate_trace_2024}.
Such experiments are difficult for large $k$ and, to the best of our knowledge,
have only been performed up to $k=3$ \cite{PhysRevLett.134.210201}.

In this context, two relaxations of the PPT criterion have been put forward:
the Descartes and  the Stieltjes criteria.
Let $\lambda_1,\ldots,\lambda_{2^n}$ denote the eigenvalues of $\rho^\Gamma$.
Moreover, denote by  $e_m(x_1,\ldots, x_{2^n})$ the elementary symmetric polynomial of degree $m$ in $2^n$ variables.
The Descartes-$m$ criterion exploits that $e_m(\lambda_1,\ldots, \lambda_{2^n})<0$ is only possible if at least one of the eigenvalues is negative, a sufficient condition for $\rho$ to be NPT entangled across $A\vert B$.
Since the real number $e_m(\lambda_1,\ldots, \lambda_{2^n})$ only depends on the PT moments of order $k\le m$~\cite{neven_symmetry_resolved_2021, bradshaw_a_closed_2025},
one can test the Descartes-$m$ criterion as long as $p_1,\ldots, p_m$ are available.
Also the Stieltjes-$m$ criterion makes use of the first $m$ PT moments.
Here, however, one defines the Hankel matrix $H_m \in \RR^{ (m+1)/2 \times (m+1)/2}$, assuming $m$ to be odd, 
whose entries are given by
\begin{align} \label{eq:stieltjes_definition}
    [H_m]_{i,j} = p_{i+j-1} \,,
\end{align}
where $i,j \in\{1,\ldots, (m+1)/2\}$.
If $\rho$ is PPT, then $H_m$ must be a \emph{positive semidefinite} (PSD) matrix~\cite{neven_symmetry_resolved_2021, yu_optimal_entanglement_2021}.
For $m=3$, we have 
\begin{align}
    H_3 = \begin{bmatrix}
        1 & p_2\\ p_2 &p_3 
    \end{bmatrix} \, 
\end{align}
which is PSD if and only if (iff) $p_2^2\le p_3$.
Conversely, every state with $p_2^2>p_3$ is NPT entangled. 
This is the celebrated $p_3$-PPT criterion~\cite{elben_mixed_state_2020}.

For completeness, let us also briefly review the optimal $p_m$-PPT  ($p_m$-OPPT) criterion~\cite{yu_optimal_entanglement_2021}.
In this approach, given $p_1,\ldots,p_{m-1}\in [0,1]$, one computes the largest possible value of $p_m$ for which the moments could stem from a PPT state.
Unfortunately, this approach is computationally prohibitive, as it entails solving a non-convex optimization problem in $2^n$ variables.
In Ref.~\cite{neven_symmetry_resolved_2021}, a closely related optimization problem is treated using Lagrange multipliers.
To our knowledge, these optimization problems have been solved analytically only for the case $m\le4$,
where the $p_m$-OPPT constitutes the best available PPT relaxation.
As we will see in Secs.~\ref{sec:stieltjes_equiv_to_ppt} and~\ref{sec:ghz_thresholds},
however, increasing the moment order can be highly beneficial.

On the other hand, estimating all moments $p_1,\ldots, p_m$ becomes an experimentally daunting challenge when the moment order $m$ grows large.
To ease this burden, we point out that the Stieltjes-$m$ criterion also implies that every $j\times j $ principal minor of $H_m$ must be PSD for PPT states.
In particular, considering   $2\times2$  principal minors of $H_m$ shows that
\begin{align} \label{eq:hankel_minor_criterion}
    p_{(k+m)/2}^2 > p_k p_m
\end{align}
implies that $\rho$ is NPT entangled.
Anticipating our subsequent generalization in Thm.~\ref{thm:ppt_relaxation_hoelder},
we refer to Eq.~\eqref{eq:hankel_minor_criterion}
as the $(k,\, (k+m)/2,\, m)$-PPT criterion.
Applying this criterion requires knowledge of only the three moments of order $k$, $(k+m)/2,$ and $m$.
For $k=1$, only the two moments $p_{(m+1)/2}$ and $p_m$ need to be measured, since $p_1=1$ by normalization.
The restriction that three moments $p_k$, $p_l$, and $p_m$ can only be compared when $l=(k+m)/2$ is somewhat unsatisfactory.
As we will show in Thm.~\ref{thm:ppt_relaxation_hoelder}, this limitation can be lifted.
The proof of our generalization is embarrassingly simple, requiring only a single application of Hölder's inequality.
Nevertheless, we consider it valuable to highlight this result due to its simplicity, its potential practical implications, and the fact that, to the best of our knowledge, it has been overlooked in the existing literature.

The remainder of our work is structured as follows.
In Sec.~\ref{sec:klm_ppt_criterion}, we present the $(k,l,m)$-PPT criterion advertised above.
In Sec.~\ref{sec:quantitative}, we show how yes/no-formulations of entanglement criteria can be turned into quantitative lower bounds on the negativity of a quantum state.
In Sec.~\ref{sec:stieltjes_equiv_to_ppt}, we derive a sufficient condition under which the Stieltjes-$m$ criterion is as strong as the abstract PPT criterion.
In Sec.~\ref{sec:ghz_thresholds}, we compare the power of different entanglement criteria using the example of locally depolarized Greenberger–Horne–Zeilinger states. 
Finally, in Sec.~\ref{sec:new_enumerators}, we introduce a new notion of quantum weight enumerators governing the decay of $p_k$ under local depolarization, and illustrate this framework using the absolutely maximally entangled state on six qubits in Sec.~\ref{sec:ame_thresholds}.

\section{A simple three-moment relaxation of the PPT criterion}
\label{sec:klm_ppt_criterion}

\begin{table*}[t]
\centering
\caption{\add 
Comparison of the moment requirements for different entanglement criteria.
Here, $p_k= \Tr[(\rho^\Gamma)^k]$ denotes the $k$-th PT moment.
The column ``Max.\ copy number'' refers to the largest number of copies required
in a sample-efficient multicopy measurement. 
The last column counts the number of distinct PT moments that must be estimated, excluding the normalization $p_1=1$.
}
\label{tab:overhead}

\setlength{\tabcolsep}{10pt}

\begin{tabular}{lccc}
\hline\hline
\add Criterion & \add Required information & \add Max. copy number & \add No. of PT moments \\
\hline
\add Fidelity witness & \add target-state fidelity & \add $1$ & \add $0$ \\
\add Purity criterion & \add $\Tr[\rho^2]$ and marginal purities & \add $2$ & \add $1$ \\
\add $p_3$-PPT / Stieltjes-3 & \add $p_2,p_3$ & \add $3$ & \add $2$ \\
\add Descartes-$m$ & \add $p_2,\ldots,p_m$ & \add $m$ & \add $m-1$ \\
\add Stieltjes-$m$ & \add $p_2,\ldots,p_m$ & \add $m$ & \add $m-1$ \\
\add $(k,l,m)$-PPT & \add $p_k,p_l,p_m$ & \add $m$ & \add $3$ \\
\add $(1,l,m)$-PPT & \add $p_l,p_m$ & \add $m$ & \add $2$ \\
\hline\hline
\end{tabular}
\end{table*}

As a warm-up, let us rederive our $2\times2$ relaxation of the Stieltjes-$m$ criterion in Eq.~\eqref{eq:hankel_minor_criterion} from scratch.
Assume that $\rho^\Gamma$ is PSD.
Applying the Cauchy-Schwarz inequality, $\Tr[K^\dagger M]^2 \le \Tr[K^\dagger K] \Tr[M^\dagger M]$ to the two matrices $K=(\rho^\Gamma)^{k/2}$ and $M=(\rho^\Gamma)^{m/2}$ yields $p_{(k+m)/2} ^2 \le p_k p_m$, which finishes the proof.
Note that the Cauchy-Schwarz inequality is a special case (with $r=s=2$) of Hölder's inequality~\cite[Eq.~(IV.43)]{bhatia_matrix_analysis_1997}, which states that
\begin{align} \label{eq:hoelder_general}
    \Tr\left [ ((KM)^\dagger (KM))^{1/2} \right]
    &\le \Tr\left[(K^\dagger K)^{{r/2}}\right]^{1/r} 
    \\ &\times 
     \Tr\left[(M^\dagger M)^{{s/2}}\right]^{1/s} \nonumber
\end{align}
for all $ 1\le r,s \le \infty$ with $1/r+1/s=1$ and all matrices $K$ and $M$.
This suggests the following generalization of Eq.~\eqref{eq:hankel_minor_criterion}.

\begin{theorem}  [The $(k,l,m)$-PPT criterion] \label{thm:ppt_relaxation_hoelder} 
    Let $k<l<m$ be integers and define $x= (m-l)/(m-k)$.
    If $p_{l} > p_{k}^x p_{m}^{1-x}$, then $\rho$ is NPT entangled. 
\end{theorem}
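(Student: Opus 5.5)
We prove the contrapositive: assuming $\rho$ is PPT, i.e.\ $\rho^\Gamma$ is PSD, we show that $p_l \le p_k^x p_m^{1-x}$. Since entanglement is detected whenever $\rho^\Gamma$ has a negative eigenvalue, this gives the claim. The plan mirrors the warm-up derivation of Eq.~\eqref{eq:hankel_minor_criterion}, with Cauchy--Schwarz replaced by H\"older's inequality~\eqref{eq:hoelder_general} using conjugate exponents other than $2,2$.

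Let $\lambda_1,\ldots,\lambda_{2^n}\ge 0$ be the eigenvalues of $\rho^\Gamma$, so that $p_j=\sum_i \lambda_i^j$ for every $j$. The arithmetic input is that $l$ is a convex combination of $k$ and $m$:
\begin{align}
  l = xk + (1-x)m, \qquad x=\frac{m-l}{m-k}\in(0,1).
\end{align}
Choose $r=1/x$ and $s=1/(1-x)$; both lie in $(1,\infty)$ and satisfy $1/r+1/s=1$. Apply H\"older's inequality to the commuting PSD matrices
\begin{align}
  K=(\rho^\Gamma)^{xk}, \qquad M=(\rho^\Gamma)^{(1-x)m}.
\end{align}
Here fractional powers are well defined precisely because $\rho^\Gamma\ge 0$.

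Then $KM=(\rho^\Gamma)^{l}$ is PSD, so $((KM)^\dagger KM)^{1/2}=(\rho^\Gamma)^l$, and the left-hand side of~\eqref{eq:hoelder_general} equals $p_l$. On the right-hand side, $(K^\dagger K)^{r/2}=(\rho^\Gamma)^{xkr}=(\rho^\Gamma)^k$, which contributes $p_k^{1/r}=p_k^{x}$. Likewise $(M^\dagger M)^{s/2}=(\rho^\Gamma)^m$ contributes $p_m^{1-x}$. Together these give $p_l\le p_k^x p_m^{1-x}$.

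Equivalently, one can skip matrix H\"older and use the scalar H\"older inequality on the vectors $(\lambda_i^{xk})_i$ and $(\lambda_i^{(1-x)m})_i$, or the log-convexity of $j\mapsto \sum_i\lambda_i^j$ on $[k,m]$.

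The only point needing care is where positivity enters. The identities $((KM)^\dagger KM)^{1/2}=(\rho^\Gamma)^l$ and $p_j=\Tr[(\rho^\Gamma)^j]\ge 0$ rely on $\rho^\Gamma\ge0$. Without it, odd moments could be negative and the fractional powers would be ill defined, and this failure is exactly what the criterion detects.
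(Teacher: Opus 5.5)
Your proof is correct and matches the paper's argument essentially verbatim. Like the paper, you prove the contrapositive by applying H\"older's inequality~\eqref{eq:hoelder_general} with $r=1/x$, $s=1/(1-x)$ to $K=(\rho^\Gamma)^{kx}$ and $M=(\rho^\Gamma)^{m(1-x)}$, using $kx+m(1-x)=l$; your alternative via scalar H\"older on the eigenvalues (or log-convexity of $j\mapsto\sum_i\lambda_i^j$) is just a more elementary phrasing of the same idea.
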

\begin{proof} 
{\add We prove the contraposition.
Assume that $\rho^\Gamma $ is PSD.
Then, the matrix $(\rho^\Gamma)^{a}$
is well defined and Hermitian for every choice of $a\ge 0$.
Let $r=1/x$ and  $s = 1/(1-x)$, so that  $1/r+1/s = 1$.
By construction, we have $x\in(0,1)$, which implies $ 1\le r,s \le \infty$, so the conditions for 
Inequality~\eqref{eq:hoelder_general}  are satisfied.
Inserting  $K=(\rho^\Gamma)^{kx}$ and $M=(\rho^\Gamma)^{m(1-x)}$  yields
\begin{align}  \label{eq:hoelder_proof}
\Tr[KM]
\le 
\Tr[K^r]^{1/r} \Tr[M^s]^{1/s}.
\end{align}
We find $\Tr[KM] = \Tr[(\rho^\Gamma)^{kx + m(1-x)}] = \Tr[(\rho^\Gamma)^l] = p_l$.
Similarly, the factors in the right-hand side of Ineq.~\eqref{eq:hoelder_proof}
can be rewritten as 
$\Tr[K^r]^{1/r} %= \Tr[(\rho^\Gamma)^{k}]^x 
= p_k^x$ and 
$\Tr[M^s]^{1/s} %= \Tr[(\rho^\Gamma)^{m}]^{1-x}
=p_m^{1-x}$.
This establishes $p_{l} \le p_{k}^x p_{m}^{1-x}$ whenever $\rho^\Gamma$ is PSD, which finishes the proof.}
\end{proof}

{\add 
Table~\ref{tab:overhead} summarizes the implementation overhead of the
criteria considered in this work. 
The main experimental costs are the largest required copy number and the number of distinct PT moments  that must be
estimated.
The Descartes and Stieltjes criteria require $p_2,\ldots,p_m$.
In contrast, the $(k,l,m)$-PPT criterion uses only $p_k,p_l,p_m$.
Thus, it reduces the number of required PT-moment estimates from $m-1$ to at most three.
}

\newcommand{\lognegativity}{E_\mathcal{N}}

\section{Quantifying entanglement via PT moments}
\label{sec:quantitative}

While the emphasis of this work is on refining entanglement criteria as such, 
we show in this section that knowledge about PT moments also leads to quantitative estimates of entanglement.
This reasoning follows the intuition that if an entanglement criterion is strongly violated, then the underlying entanglement should also be significant in a quantitative sense. 
Such quantitative entanglement witnesses~\cite{quant-ph/0607167, Audenaert06} often provide lower bounds on entanglement measures from experimental data in a substantially more efficient manner than is achievable via quantum state tomography.

In the context of PT moments, previous work~\cite{tarabunga_quantifying_mixed_2025} showed that it is possible to derive lower bounds for the logarithmic negativity,
\begin{align} \label{eq:lognegativity_definition}
\lognegativity[\rho] = \log_2\left( \| \rho^\Gamma \|_1 \right),
\end{align}
where 
$\|A\|_q = \Tr[ (A^\dagger A)^{q/2}]^{1/q}$ 
denotes the Schatten-$q$ norm of a matrix $A$ for $q\ge1$.
Note that the logarithmic negativity is a (non-convex) entanglement measure that upper bounds the distillable entanglement~\cite{Volume,PhD,VidalNegativity,plenio_logarithmic_negativity_2005}.
In analogy with $p_m = \Tr \left[(\rho^\Gamma)^m\right]$, one defines
\begin{align} \label{eq:tilde_moments}
\tilde{p}_m = \| \rho^\Gamma \|_m^m \, .
\end{align}
By mapping the problem to a classical probability distribution and leveraging the monotonicity of Rényi entropies, one can show~\cite{tarabunga_quantifying_mixed_2025} that
\begin{align} \label{eq:old_bound_lognegativity}
\lognegativity[\rho] \ge {\add \frac{1}{2 - m} \log_2(\tilde{p}_m) }+ \frac{1 - m}{2-m}\log_2(p_2)  \, .
\end{align}
If $m$ is even, then $\tilde{p}_m = p_m$, and the lower bound in Eq.~\eqref{eq:old_bound_lognegativity} can be efficiently estimated using the $m$-cycle test.
Our next result generalizes this lower bound, which is recovered in the special case of $l = 2$.

\begin{theorem}[Quantitative lower bounds] 
\label{thm:quantitative_bounds}
    Let $l$ and $m$ be integers and define $x = (m-l)/(m-1)$.
    Then
    \begin{align}
        \lognegativity[\rho] \ge 
        {\add \frac{x-1}{x}  \log_2(\tilde{p}_m) + \frac{1}{x}
        \log_2(\tilde{p}_l)  \,.
        }
    \end{align}
\end{theorem}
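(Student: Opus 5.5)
The plan is to reduce the claim to the same Hölder-type interpolation used in the proof of Thm.~\ref{thm:ppt_relaxation_hoelder}, now applied to the absolute value $|\rho^\Gamma| = ((\rho^\Gamma)^\dagger\rho^\Gamma)^{1/2}$ instead of $\rho^\Gamma$. I implicitly assume $1<l<m$, so that $x=(m-l)/(m-1)\in(0,1)$; this is the only regime in which the bound is meaningful, and it contains the case $l=2$ of Eq.~\eqref{eq:old_bound_lognegativity}.

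The first step is to note that $|\rho^\Gamma|$ is positive semidefinite even when $\rho^\Gamma$ is not. Since $\rho^\Gamma$ is Hermitian, the eigenvalues of $|\rho^\Gamma|$ are $|\lambda_i|$. Hence $\Tr[|\rho^\Gamma|^q]=\|\rho^\Gamma\|_q^q=\tilde p_q$ for every $q\ge 1$, and in particular $\tilde p_1=\|\rho^\Gamma\|_1$.

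The second step is the interpolation identity $l = x\cdot 1 + (1-x)\, m$, which follows from $m - x(m-1) = m-(m-l) = l$. This is exactly the situation of Thm.~\ref{thm:ppt_relaxation_hoelder} with $k=1$. I would repeat that proof verbatim: take $K=|\rho^\Gamma|^{x}$ and $M=|\rho^\Gamma|^{m(1-x)}$, and apply Ineq.~\eqref{eq:hoelder_general} with $r=1/x$ and $s=1/(1-x)$. Both $K$ and $M$ are PSD and commute, so the left-hand side of Ineq.~\eqref{eq:hoelder_general} equals $\Tr[KM]$. This gives
\begin{align}
\tilde p_l \le \tilde p_1^{\,x}\, \tilde p_m^{\,1-x}.
\end{align}
Equivalently, the eigenvalues of $|\rho^\Gamma|$ are nonnegative, so this is just log-convexity of $q\mapsto \sum_i |\lambda_i|^q$.

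The final step is to take $\log_2$ of both sides, which are positive since $\rho^\Gamma\neq 0$. This yields $\log_2 \tilde p_l \le x \log_2 \|\rho^\Gamma\|_1 + (1-x)\log_2 \tilde p_m$. Dividing by $x>0$ and rearranging gives
\begin{align}
E_\mathcal{N}[\rho] = \log_2\|\rho^\Gamma\|_1 \ge \frac{1}{x}\log_2\tilde p_l + \frac{x-1}{x}\log_2 \tilde p_m,
\end{align}
which is the claim. As a consistency check, setting $l=2$ gives $x=(m-2)/(m-1)$, so $1/x=(m-1)/(m-2)$ and $(x-1)/x=-1/(m-2)$. Using $\tilde p_2=p_2$, this reproduces Eq.~\eqref{eq:old_bound_lognegativity}.

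There is no genuine obstacle here. The only point needing care is conceptual: the Hölder step must be applied to $|\rho^\Gamma|$ rather than to $\rho^\Gamma$, because the latter may have negative eigenvalues, so its fractional powers are not available. Doing so forces the absolute moments $\tilde p_q$ to appear in place of $p_q$.
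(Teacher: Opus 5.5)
Your proposal is correct and follows essentially the same route as the paper: apply the Hölder argument of Thm.~\ref{thm:ppt_relaxation_hoelder} with $k=1$ to the positive matrix $|\rho^\Gamma|$ to get $\tilde p_l \le \tilde p_1^{x}\tilde p_m^{1-x}$, then take $\log_2$, divide by $x>0$, and use $E_\mathcal{N}[\rho]=\log_2\tilde p_1$. Your explicit restriction to $1<l<m$ is the same assumption the paper makes implicitly when it applies Hölder with $x\in(0,1)$ and divides by $x>0$.
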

\begin{proof}
Replacing $\rho^\Gamma$ in the proof of Thm.~\ref{thm:ppt_relaxation_hoelder} by the matrix $|\rho^\Gamma| = \sqrt{(\rho^\Gamma)^\dagger \rho ^\Gamma}$, which is positive for all states $\rho$,
and specializing to the case $k=1$
yields $\tilde{p}_l \le \tilde{p}_1^x \tilde{p}_m^{1-x}$.
{\add Taking the logarithm shows $\log_2(\tilde{p_l}) \le x\log_2(\tilde p_1) + (1-x) \log_2(\tilde p_m)$.
After dividing by $x>0$ and rearranging terms, we find  
\begin{equation}
\log_2(\tilde{p}_1) \ge \frac{x-1}{x}  \log_2(\tilde{p}_m) + \frac{1}{x} \log_2(\tilde{p}_l) \, ,
\end{equation}
and} 
$\lognegativity[\rho] = \log_2(\tilde{p}_1)$ finishes the proof.
\end{proof}

{\add 
For the sake of concreteness and conciseness, we focus primarily on the PPT criterion and its relaxations throughout this work.
Nevertheless, we should point out that Thm.~\ref{thm:quantitative_bounds}  straightforwardly extends to the so-called index permutation criterion~\cite{horodecki_separability_of_2006}.
This criterion  encompasses the computable cross-norm, also known as the realignment (CCNR) criterion~\cite{chen_a_matrix_2003,rudolph_further_results_2005} as an important special case.
Given an $n$-qubit state $\rho$ and a fixed permutation $\pi\in\text{Sym}(2n)$,
one defines the matrix $\rho^\pi \in \mathbb{C}^{2^n \times 2^n }$ 
through 
$ \bra{a_1,\ldots, a_n} \rho^\pi \ket{a_{n+1}, \ldots, a_{2n}} 
    =
    \bra{a_{\pi^{-1}(1)},\ldots, a_{\pi^{-1}(n)}} 
    \rho
    \ket{a_{\pi^{-1}(n+1)}, \ldots, a_{\pi^{-1}(2n)}}$.
As an example, let $A\subseteq{1,\ldots,n}$ be a subset of qubits and define the permutation
$\Gamma = \prod_{i\in A} (i\quad n+i)$.
Then $\rho^\Gamma$ coincides with the partial transpose with respect to the qubits in $A$, as defined in Eq.~\eqref{eq:partial_transpose_definition}.
As a generalization of Eq.~\eqref{eq:tilde_moments}, one defines the \emph{permutation moments},
\begin{align}
    \tilde p ^\pi_{m} =  \| \rho^\pi  \|_m^m \, ,
\end{align}
for each $m\in \mathbb{N}$.
Whenever the moment order $m$ is even, it is possible to efficiently estimate $\tilde{p}^\pi_m$ by measuring $\rho^{\otimes m}$ using the $\pi$-permutation test~\cite{liu_detecting_entanglement_2022}.
For $m=1$, we define the logarithmic $\pi$-negativity
\begin{align}
    \lognegativity ^\pi [\rho] = \log_2( \tilde{p}^\pi_1) \, ,
\end{align}
which generalizes Eq.~\eqref{eq:lognegativity_definition}.
Replacing $\rho^\Gamma$ by $\rho^\pi$ in the proof of Thm.~\ref{thm:quantitative_bounds} establishes the lower bound
\begin{align} \label{eq:quantitative_bound_generalization}
    \lognegativity ^\pi [\rho]  \ge \frac{x-1}{x} \log_2(\tilde p_m^\pi ) + \frac{1}{x} \log_2(\tilde p_l^\pi ) \, ,
\end{align}
where $l< m$ are integers and $x=(m-l)/(m-1)$.
Note that the special case in which $l=2$ and $\pi$ corresponds to the CCNR criterion   has recently been established  in Ref.~\cite{tarabunga_quantifying_mixed_2025}.

}

\section{The Stieltjes criterion is equivalent to the PPT criterion}
\label{sec:stieltjes_equiv_to_ppt}

In Ref.~\cite{neven_symmetry_resolved_2021},
it has been pointed out that for every $n$-qubit state $\rho$ that is NPT entangled across a bipartition $A\vert B$,
there exists an order $3\le m \le 2^n$ for which the Descartes-$m$ criterion detects this entanglement.
On the other hand, the question of whether the Stieltjes-$m$ criterion (for a sufficiently large value of $m$) is as strong as the PPT criterion  has, to our knowledge, not been addressed before. 
Our next result answers this in the affirmative.

\begin{theorem}
%[Detecting all NPT entangled states with the Stieltjes criterion]
[The Stieltjes criterion detects all NPT states]
\label{thm:stieltjes_iff_ppt}
    Denote by $N$ the number of distinct eigenvalues of $\rho^{\Gamma}$.
    If $\rho $ is NPT entangled and $m\ge 2N-1$, then $H_m$ is not PSD.
    %In other words, the Stieltjes criterion detects all NPT entangled states.
\end{theorem}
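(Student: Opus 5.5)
We will exhibit an explicit vector $v$ with $v^T H_m v<0$. Let $\mu_1,\ldots,\mu_N$ be the distinct eigenvalues of $\rho^\Gamma$, with multiplicities $d_1,\ldots,d_N\ge 1$. Since $\rho$ is NPT, one of them, say $\mu_1$, is negative. Set $n'=(m+1)/2$, so $H_m\in\RR^{n'\times n'}$ with $[H_m]_{i,j}=p_{i+j-1}=\sum_r d_r \mu_r^{i+j-1}$. For a real polynomial $q(t)=\sum_{i=1}^{n'} v_i t^{i-1}$ of degree at most $n'-1$ we have the quadratic-form identity
\begin{align}
v^T H_m v = \sum_{r=1}^N d_r\, \mu_r\, q(\mu_r)^2 = \Tr\!\left[\rho^\Gamma\, q(\rho^\Gamma)^2\right] .
\end{align}
This identity is a direct expansion of the Hankel structure and is the backbone of the argument.

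Next we choose $q$ to be the Lagrange-type polynomial that vanishes on all distinct eigenvalues except $\mu_1$:
\begin{align}
q(t)=\prod_{r=2}^{N}(t-\mu_r),
\end{align}
which has degree $N-1$. The hypothesis $m\ge 2N-1$ is exactly $n'-1=(m-1)/2\ge N-1$, so $q$ is admissible and its coefficient vector $v$ (padded with zeros) lives in $\RR^{n'}$. Then all terms with $r\ge2$ vanish, and
\begin{align}
v^T H_m v = d_1\,\mu_1 \prod_{r=2}^N (\mu_1-\mu_r)^2 <0,
\end{align}
because $d_1\ge1$, $\mu_1<0$, and the $\mu_r$ are distinct, so the product is strictly positive. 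Hence $H_m$ is not PSD.

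The only step needing any care is matching the degree bound to the matrix size: the key observation is that distinctness (not the total count $2^n$) controls the degree of the annihilating polynomial, and the index shift $i+j-1$ in the Hankel entries produces the extra factor $\mu_r$ that carries the sign. If $m$ is even, or the paper intends $H_m$ only for odd $m$, we restrict to odd $m\ge 2N-1$ as in Eq.~\eqref{eq:stieltjes_definition}; no further obstacle is expected.
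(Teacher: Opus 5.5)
Your proof is correct and follows the paper's argument essentially verbatim. The paper likewise takes $f(x)=\prod_{i=2}^N(x-\lambda_i)$, pads its coefficient vector with zeros (using $(m+1)/2\ge N$), and evaluates the Hankel quadratic form as $\sum_l \mu_l\lambda_l f(\lambda_l)^2=\mu_1\lambda_1 f(\lambda_1)^2<0$.
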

\begin{proof}
Denote the eigenvalues of $\rho^\Gamma$ by $\lambda_1,\ldots, \lambda_N$ and their multiplicities by $\mu_1,\ldots, \mu_N$.
Given $\lambda_1<0$, we need to construct $\mathbf{c}\in\RR^M$, with $\mathbf{c}^\text{T} H_m \mathbf{c}<0$,
where $M={(m+1)/2}$ is the dimension of the Hankel matrix $H_m$ from  Eq.~\eqref{eq:stieltjes_definition}.
Consider the polynomial $f(x) = \prod_{i=2}^N (x-\lambda_i)$
and denote its coefficients by $c_0,\ldots, c_{N-1}$. 
By assumption, $M \ge N$, so we can define the vector $\mathbf{c}= (c_0, \ldots, c_{N-1}, 0, \ldots, 0) \in \RR^M$. 
With this choice, we obtain
\begin{align}
    \mathbf{c}^\text{T} H_m \mathbf{c} =  \sum_ {i,j=0}^{N-1} c_i c_j p_{i+j+1} \,.
\end{align}
Since $p_{k} = \sum_{l=1}^N \mu_l \lambda^k_l $, we can further write
\begin{align} \label{eq:stieltjes_ppt_proof_big_sum}
    \mathbf{c}^\text{T} H_m \mathbf{c} =  \sum_{l=1}^N \mu_l \lambda_l \sum_ {i,j=0}^{N-1} c_i\lambda^i %\sum_ {j=0}^{N-1}
    c_j \lambda^j = \sum_{l=1}^N \mu_l \lambda_l f(\lambda_l)^2 \, .
\end{align}
By construction of $f$, we have $f(\lambda_l)= 0$ for all $l \ge 2$.
Hence, only one term in Eq.~\eqref{eq:stieltjes_ppt_proof_big_sum} survives, namely
$ \mathbf{c}^\text{T} H_m \mathbf{c} =  \mu_1 \lambda_1 f(\lambda_1)^2$.
Since the eigenvalues $\lambda_1,\ldots,\lambda_N$ were chosen to be distinct, we have $f(\lambda_1) \neq 0$, which implies $f(\lambda_1)^2 > 0$.  
Finally, we have $\lambda_1 <0 $ with multiplicity $\mu_1 \ge 1$, 
which implies $ \mathbf{c}^\text{T} H_m \mathbf{c}<0$ and finishes the proof.
\end{proof}

{\add
Note that a different problem that is similar in spirit with 
Thm.~\ref{thm:stieltjes_iff_ppt} has been solved before in
Lem.~4 of Ref.~\cite{yu_optimal_entanglement_2021}.
Here, the authors showed that---for an arbitrary moment order $m$---the matrix $H_m$ cannot be PSD unless the moment vector $(p_1,\ldots,p_m)$ defining it can be approximated arbitrarily well by moments of a sequence of PPT states.
This does, however, not imply that for every NPT entangled state $\rho$,
the moment matrix $H_m$ is non-PSD.
In fact, this statement is incorrect unless one selects $m$ sufficiently large.
By our Thm.~\ref{thm:stieltjes_iff_ppt}, it always suffices to choose $m=2^{n+1}-1$ for an unknown $n$-qubit state.
In summary, Lem.~4 of Ref.~\cite{yu_optimal_entanglement_2021}
establishes an asymptotic moment-feasibility statement, whereas
Thm.~\ref{thm:stieltjes_iff_ppt} gives a finite-order detection
guarantee for arbitrary NPT states.
}

Let us apply Thm.~\ref{thm:stieltjes_iff_ppt} to show that the Stieltjes-5 criterion is as strong as the PPT criterion in the simple case of a globally depolarized stabilizer state.
{\add An $n$-qubit stabilizer state vector $\ket{\psi}\in (\CC^2)^{\otimes n}$ is defined as the unique (up to an irrelevant global phase) $+1$-eigenvector of all operators $S\in \mathcal{S}$, where $\mathcal{S}$ denotes the stabilizer group of $\ket{\psi}$.
Hereby, $\mathcal{S}$ must be an Abelian subgroup of the $n$-qubit Pauli group with $\#\mathcal{S}=2^n$ and $-\mathbbm 1 \not \in \mathcal{S}$.}
Let $\Psi _\text{stab} = \ket{\psi}\!\bra{\psi}$ be {\add an arbitrary}  $n$-qubit stabilizer state and 
$A\vert B$ an arbitrary bipartition of $\{1,\ldots,n\}$.
Then, there exist unitaries $U_A$ and $U_B$ acting on $A$ and  $B$, respectively,
such that
$U_A\otimes U_B \ket{\psi} = \ket{\Phi^+}^{\otimes r} \otimes \ket{0}^{n-2r}$.
Hereby, the number $r$ of maximally entangled Bell pairs, $\ket{\Phi^+} =  (\ket{0,0}+\ket{1,1}) / \sqrt{2}$,
that are shared between $A$ and $B$,
can be efficiently computed from the stabilizer group~\cite{fattal_entanglement_in_2004}.
This shows that the Schmidt rank of $\ket{\psi}$ across the bipartition $A|B$ is $2^r$ and all Schmidt coefficients are given by
$  2^{-r}$ with degeneracy $2^r$.
Therefore, the eigenvalues of $\Psi_\text{stab}^\Gamma$ are given by
$\pm 2^{-r}$ with multiplicities $\mu_\pm = 2^{r-1}(2^r \pm1)$, 
and $0$ with multiplicity $\mu_0 = 2^n-2^{2r}$.
For the globally depolarized state $\mathcal{D}^\text{glob}_{\varepsilon}[\Psi_\text{stab}] = (1-\varepsilon) \Psi_\text{stab} +  \varepsilon \frac{\mathbbm {1}}{2^n}$,
the eigenvalues of  $\mathcal{D}^\text{glob}_{\varepsilon}[\Psi_\text{stab}^\Gamma]$ follow as 
$\lambda_\pm = \varepsilon/2^{n} \pm (1-\varepsilon)/2^{r}$ and $\lambda_0=\varepsilon/2^{n}$.
The eigenvalue $\lambda_- $ is negative iff $\varepsilon < 
1-1/(2^{n-r}+1)$, in which case the noisy stabilizer state is NPT entangled.
By virtue of Thm.~\ref{thm:stieltjes_iff_ppt}, it suffices to consider only the moments of order $m\le 5$ to recover this noise threshold.

\begin{corollary}[Low moments suffice for stabilizer states with global noise]
\label{cor:p5_ppt_suffices_for_global_noise}
The Stieltjes-5 criterion % from Eq.~\eqref{eq:stieltjes_definition}
detects entanglement in $\mathcal{D}^\mathrm{glob}_{\varepsilon}[\Psi_\mathrm{stab}]$  for all values of $\varepsilon<1-1/(2^{n-r}+1)$.
\end{corollary}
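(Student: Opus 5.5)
The plan is to apply Thm.~\ref{thm:stieltjes_iff_ppt} with $m=5$, which reduces the corollary to counting the distinct eigenvalues of the partially transposed noisy state. Since partial transposition is linear and $(\mathbbm 1)^\Gamma=\mathbbm 1$, we have $\mathcal{D}^\text{glob}_{\varepsilon}[\Psi_\text{stab}]^\Gamma = (1-\varepsilon)\Psi_\text{stab}^\Gamma + \varepsilon\,\mathbbm 1/2^n$. Its spectrum is therefore the spectrum of $\Psi_\text{stab}^\Gamma$, rescaled and shifted, as computed above:
\begin{align}
\lambda_\pm=\frac{\varepsilon}{2^n}\pm\frac{1-\varepsilon}{2^r},\qquad \lambda_0=\frac{\varepsilon}{2^n}.
\end{align}
These three values have multiplicities $\mu_\pm=2^{r-1}(2^r\pm1)$ and $\mu_0=2^n-2^{2r}$.

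First, I will observe that the number $N$ of distinct eigenvalues is at most $3$. The value $\lambda_0$ occurs only when $\mu_0>0$. The value $\lambda_-$ occurs only when $\mu_->0$, which holds iff $r\ge1$.

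Second, I will handle the case $r=0$ separately. Then $\mu_-=0$, so every eigenvalue is nonnegative. The threshold also reads $\varepsilon<1-1/(2^n+1)$, but a product state is never NPT. The statement as written would then be vacuous only if we interpret "detects entanglement" as applying to NPT states. I expect this edge case to be the only subtlety. I would resolve it by noting that for $r=0$ the state is separable, so the claimed threshold must implicitly assume $r\ge1$. Alternatively, I would restrict the claim to NPT states, which is consistent with the preceding text: "$\lambda_-$ is negative iff \dots", and that equivalence uses $\mu_->0$.

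Third, for $r\ge1$ and $\varepsilon<1-1/(2^{n-r}+1)$, I will verify that $\lambda_-<0$. Indeed, $\lambda_-<0$ iff $\varepsilon/2^n<(1-\varepsilon)/2^r$, i.e.\ iff $\varepsilon(2^{-n}+2^{-r})<2^{-r}$. This is equivalent to $\varepsilon<2^{n-r}/(2^{n-r}+1)=1-1/(2^{n-r}+1)$. Hence the state is NPT entangled.

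Since $N\le3$, we have $2N-1\le5$, and Thm.~\ref{thm:stieltjes_iff_ppt} yields that $H_5$ is not PSD. This means the Stieltjes-5 criterion detects the entanglement. Degenerate coincidences such as $\lambda_+=\lambda_0$ (when $\varepsilon=1$) only lower $N$ and are harmless, because the theorem needs only an upper bound on $N$.
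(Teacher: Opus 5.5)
Your proof is correct and follows the paper's argument: the partially transposed noisy state has at most $N=3$ distinct eigenvalues, so Thm.~\ref{thm:stieltjes_iff_ppt} applies with $m=5$. Your explicit check that $\lambda_-<0$ exactly below the stated threshold is valid, and so is your observation that the statement tacitly requires $r\ge1$ (for $r=0$ one has $\mu_-=0$ and the state is separable); the paper leaves both points implicit.
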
 
\begin{proof}
Having already established the spectrum of $ \mathcal{D}^\text{glob}_{\varepsilon}[\Psi]^\Gamma$, we know that there are only $N=3$ eigenvalues. 
As such, Thm.~\ref{thm:stieltjes_iff_ppt} applies with $m=5$.
\end{proof}

We have just established in Cor.~\ref{cor:p5_ppt_suffices_for_global_noise} that the case of global white noise is, as usual, rather unspectacular. 
As such, let us turn to the more interesting setting of local noise.
We define the single-qubit depolarizing channel, $\mathcal{D}_\varepsilon$, of error strength $\varepsilon$ by $\mathcal{D}_\varepsilon [\rho ] = (1-\varepsilon)\rho + \varepsilon \Tr[\rho]\mathbbm{1}/2$.
As an illustrative example, consider the $n$-qubit 
\emph{Greenberger-Horne-Zeilinger} (GHZ) state, 
whose state vector is defined as
\begin{equation}
\ket{\text{GHZ}^n} = \frac{\ket{0}^{\otimes n} + \ket{1}^{\otimes n}}{\sqrt2} \, .
\end{equation} 
Such GHZ states have been used to probe the quantum nature of large-scale superconducting quantum computers involving 120 qubits in the recent past~\cite{javadiabhari_big_cats_2025}.
We denote the corresponding projector by $\Psi_\text{GHZ}= 
\ket{\text{GHZ}^n}\!\bra{\text{GHZ}^n}$ whenever the number of qubits is clear from context.
For simplicity, we assume that the number of qubits $n$ is even and we consider a balanced bipartition $A = \{1, \ldots, n/2\}$ and $B = \{n/2 + 1, \ldots, n\}$.
For the abstract PPT criterion, early work~\cite{simon_robustness_of_2002} showed that GHZ states exhibit a remarkably high noise threshold,
\begin{align} \label{eq:eps_max_ppt_ghz}
    \varepsilon_\text{max}^\text{PPT} = 1-(2^{2-2/n}+1)^{-1/2} \, .
\end{align}
In our situation, the PPT criterion is known to be both necessary and sufficient for entanglement~\cite{kay_optimal_detection_2011}.
In other words, $\mathcal{D}^{\otimes n}_{\varepsilon}[\Psi_\mathrm{GHZ}]$  is entangled across $A\vert B$ iff $\varepsilon<\varepsilon^\text{PPT}_\text{max}$.
We can invoke Thm.~\ref{thm:stieltjes_iff_ppt} to establish that the Stieltjes criterion fully recovers this threshold without requiring exponentially many moments of $\rho^\Gamma$.

\begin{corollary}[Stieltjes for locally depolarized GHZ states]
\label{cor:stieltjes_for_locally_depolarized_ghz_states}

For $m=2n+5$, the Stieltjes-$m$ criterion detects entanglement in  $\mathcal{D}^{\otimes n}_{\varepsilon}[\Psi_\mathrm{GHZ}]$  for all values of $\varepsilon<\varepsilon_\mathrm{max}^\mathrm{PPT}$.
\end{corollary}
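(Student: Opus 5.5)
The plan is to apply Thm.~\ref{thm:stieltjes_iff_ppt}. For $\varepsilon<\varepsilon^\mathrm{PPT}_\mathrm{max}$ the state $\rho=\mathcal{D}^{\otimes n}_{\varepsilon}[\Psi_\mathrm{GHZ}]$ is NPT entangled across $A\vert B$, by the threshold in Eq.~\eqref{eq:eps_max_ppt_ghz} quoted from Ref.~\cite{simon_robustness_of_2002}. Since $m=2n+5$ is odd, $H_m$ is well defined. It therefore suffices to show that $\rho^\Gamma$ has at most $N\le n+3$ distinct eigenvalues, because then $m=2n+5\ge 2N-1$ and the theorem yields that $H_m$ is not PSD.

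To count eigenvalues, I will compute $\rho$ explicitly in the computational basis. Write
\begin{align}
\Psi_\mathrm{GHZ}=\tfrac12\left(\ket{0^n}\!\bra{0^n}+\ket{1^n}\!\bra{1^n}+\ket{0^n}\!\bra{1^n}+\ket{1^n}\!\bra{0^n}\right).
\end{align}
On a single qubit, $\mathcal{D}_\varepsilon$ maps $\ket{0}\!\bra{0}\mapsto(1-\tfrac{\varepsilon}{2})\ket{0}\!\bra{0}+\tfrac{\varepsilon}{2}\ket{1}\!\bra{1}$ (and symmetrically for $\ket{1}\!\bra{1}$), and it maps $\ket{0}\!\bra{1}\mapsto(1-\varepsilon)\ket{0}\!\bra{1}$.

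The diagonal terms of $\Psi_\mathrm{GHZ}$ therefore become a diagonal matrix. Its entry at bitstring $z$ depends only on the Hamming weight $w=|z|$:
\begin{align}
d_w=\tfrac12\left[(1-\tfrac{\varepsilon}{2})^{n-w}(\tfrac{\varepsilon}{2})^{w}+(\tfrac{\varepsilon}{2})^{n-w}(1-\tfrac{\varepsilon}{2})^{w}\right].
\end{align}
The coherences become $c\,(\ket{0^n}\!\bra{1^n}+\mathrm{h.c.})$ with $c=(1-\varepsilon)^n/2$.

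Next I take the partial transpose on $A$. Diagonal matrices are invariant under it. The coherence $\ket{0_A0_B}\!\bra{1_A1_B}$ is mapped to $\ket{1_A0_B}\!\bra{0_A1_B}$, and analogously for its Hermitian conjugate. Hence $\rho^\Gamma$ is diagonal except for one $2\times 2$ block on $\mathrm{span}\{\ket{0_A1_B},\ket{1_A0_B}\}$. Both of these strings have weight $n/2$, so the block is
\begin{align}
\begin{bmatrix} d_{n/2} & c\\ c & d_{n/2}\end{bmatrix},
\end{align}
with eigenvalues $d_{n/2}\pm c$. All remaining eigenvalues lie in $\{d_0,\ldots,d_n\}$, which has at most $n+1$ elements; in fact at most $n/2+1$, since $d_w=d_{n-w}$. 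Altogether $N\le n+3$, and the claim follows from Thm.~\ref{thm:stieltjes_iff_ppt}.

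The only step needing care is the bookkeeping of the partial transpose. I must check that it sends the GHZ coherence into a single $2\times2$ block between two strings of equal weight, and that nothing else becomes off-diagonal. This holds because all other terms are diagonal in a product basis. The bound $N\le n+3$ is crude but sufficient, and no finer analysis of coincidences among the $d_w$ is needed.
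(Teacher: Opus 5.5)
Your proposal is correct and follows the paper's proof: both show that $\rho^\Gamma$ has at most $N=n+3$ distinct eigenvalues and then invoke Thm.~\ref{thm:stieltjes_iff_ppt} with $m=2n+5\ge 2N-1$. The one difference is that you derive the spectrum $\{d_w\}\cup\{d_{n/2}\pm c\}$ yourself rather than quoting it from Ref.~\cite{liu_decay_of_2009}. Your observation $d_w=d_{n-w}$ also sharpens the count to $N\le n/2+3$, so Stieltjes-$(n+5)$ already suffices; this matches the values $m=9$ and $m=11$ that the paper reports for $n=4$ and $n=6$.
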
 
\begin{proof}
Since we assume that the number of qubits $n$ is even and that the bipartition $A \vert B$ is balanced, we can directly use the eigenvalue spectrum of $\mathcal{D}_\varepsilon^{\otimes n}[\Psi_\mathrm{GHZ}^{\Gamma}]$ as reported in the literature~\cite{liu_decay_of_2009}.
The eigenvalues are given by 
\begin{align} \label{eq:eigenvalues_pt_ghz_1}
    \lambda_{j} = \frac{1}{2} \left( \left(1-\frac{\varepsilon}{2}\right)^{n-j} \left(\frac{\varepsilon}{2}\right)^j + \left(1-\frac{\varepsilon}{2}\right)^{j} \left(\frac{\varepsilon}{2}\right)^{n-j} \right)
\end{align}
with multiplicities $\mu_j = \binom{n}{j}-2\delta_{j, n/2}$,
as well as
\begin{align} \label{eq:eigenvalues_pt_ghz_2}
    \lambda_\pm = \left(1-\frac{\varepsilon}{2}\right)^{n/2} \left(\frac{\varepsilon}{2}\right)^{n/2}   \pm \frac{(1-\varepsilon)^n}{2}
\end{align}
with multiplicities $\mu_\pm = 1$.
Thus, we have no more than $N=n+3$ distinct eigenvalues, 
and Thm.~\ref{thm:stieltjes_iff_ppt} finishes the proof.
\end{proof}

Note that the moment bound in Cor.~\ref{cor:stieltjes_for_locally_depolarized_ghz_states} is not tight.
For small system sizes, we find that significantly lower moments already suffice to fully reproduce the PPT criterion: for $n=2$, $4$, and $6$, this is achieved with $m=3$, $9$, and $11$, respectively, instead of the general bound $2n+5 = 9$, $13$, and $17$ predicted by the corollary.

\section{Noise thresholds for GHZ states}
\label{sec:ghz_thresholds}

In the previous sections, we introduced the $(k,l,m)$-PPT criterion (Thm.~\ref{thm:ppt_relaxation_hoelder}) and established that the Stieltjes-$m$ criterion is as strong as the PPT criterion  
whenever the partially transposed state has no more than $(m+1)/2$ distinct eigenvalues (Thm.~\ref{thm:stieltjes_iff_ppt}).
However, the precise relationship between these and other common entanglement criteria remains unclear.
In this section, we address this question by investigating the important case of locally depolarized GHZ states. 
Importantly, applying each criterion (except for the fidelity criterion) does not rely on prior knowledge of the exact form of the state.

As in Cor.~\ref{cor:stieltjes_for_locally_depolarized_ghz_states}, we restrict attention to an even number of qubits, $n$, and consider the balanced bipartition $A=\{1,\ldots,n/2\}$ versus $B=\{n/2+1, \ldots, n\}$.
As a figure of merit (for comparing different criteria), we consider the maximal amount, $\varepsilon_{\max}$, of local depolarizing noise such that a given criterion still detects entanglement in the noisy state $\mathcal{D}_\varepsilon^{\otimes n}[\Psi_{\text{GHZ}}]$.
Since GHZ states are well-studied in the literature, we begin our discussion by reviewing various entanglement criteria as well as the experimental state-of-the-art.

\begin{figure*}
    \centering
    \includegraphics[width= .8\linewidth]{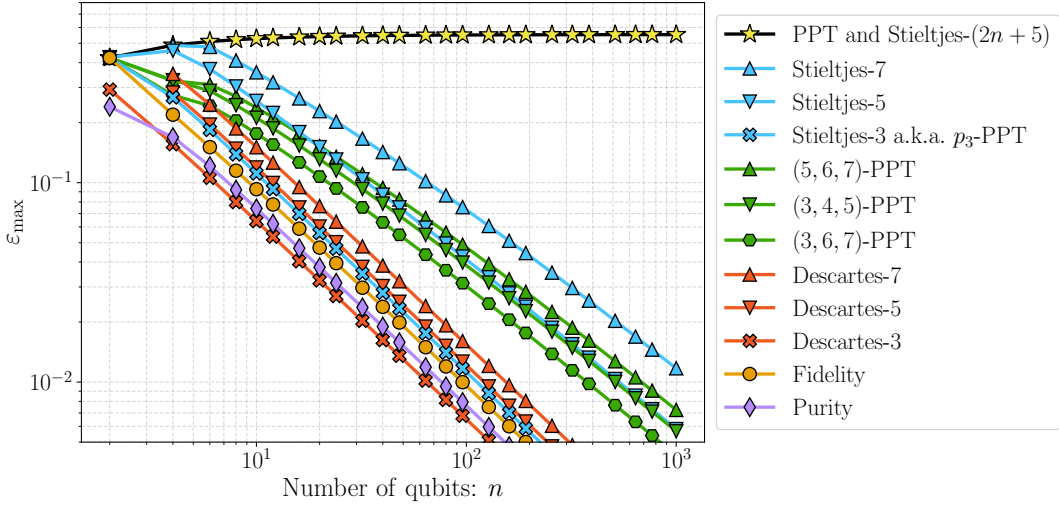}
    \caption{Detectability thresholds for various entanglement criteria applied to locally depolarized GHZ states as a function of the number $n$ of qubits. If the error rate is below $\varepsilon_\text{max}$, the corresponding criterion certifies entanglement across a balanced bipartition.
    The abstract PPT criterion yields the highest thresholds, however, it is not experimentally accessible if $n$ is large. 
    By virtue of Cor.~\ref{cor:stieltjes_for_locally_depolarized_ghz_states}, the PPT threshold is already achieved by the Stieltjes-$m$ criterion for $m=2n+5$.
    Fidelity and purity criteria have been demonstrated experimentally with 
    $120$ and six qubits, respectively, due to their comparatively modest implementation overhead. 
    However, their thresholds scale inversely with the number of qubits, which limits their applicability. 
    If $m$ is constant, the Descartes-$m$ criterion does not overcome this limitation, despite making use of higher PT moments.
    The Stieltjes-$m$ criterion and the $(k,l,m)$-PPT criterion from Thm.~\ref{thm:ppt_relaxation_hoelder}, on the other hand, 
    converge more slowly to zero for $m\in\{5,7\}$, possibly enabling experimental entanglement detection with a larger number of qubits in the near future.  
    The $(k,l,m)$-PPT criterion is easier to implement than the Stieltjes-$m$ criterion, as it requires access to fewer PT moments. }
    \label{fig:ghz_loc_depol_vs_n}
\end{figure*}

Recall from Eq.~\eqref{eq:eps_max_ppt_ghz} that the noisy state $\mathcal{D}_\varepsilon^{\otimes n}[\Psi_{\text{GHZ}}]$ is entangled iff $\varepsilon < \varepsilon^\text{PPT}_\text{max} = 1-(2^{2-2/n}+1)^{-1/2}$.
In Fig.~\ref{fig:ghz_loc_depol_vs_n}, we plot this PPT noise threshold (yellow stars) 
as a function of $n$.
As noted before~\cite{simon_robustness_of_2002}, 
we can see that $\varepsilon^\text{PPT}_\text{max}$ converges to $1-1/\sqrt{5} \approx 0.5528$ in the large-$n$ limit.
Therefore, GHZ states can withstand an impressive amount of more than $50\,\%$ of effective per-qubit noise before its entanglement is lost.
However, applying the abstract PPT criterion becomes experimentally intractable as the number of qubits increases.
In principle, Cor.~\ref{cor:stieltjes_for_locally_depolarized_ghz_states} resolves this issue, as it suffices to measure the moments $p_k = \Tr[(\rho^\Gamma_\text{GHZ})^k]$ for all $ k\le 2n+5$ in order to achieve $\varepsilon^{\text{Stieltjes-}2n+5}_\text{max} = \varepsilon^\text{PPT}_\text{max}$.
That said, measuring all these moments for systems with hundreds of qubits would be extremely challenging.

In practice, \emph{fidelity measurements} are most commonly used since they can be performed in single-copy experiments with polynomial sample complexity~\cite{flammia_direct_fidelity_2011}.
State-of-the-art experiments have demonstrated GHZ states across a range of platforms and system sizes.
In photonic systems, GHZ states of up to $18$ qubits encoded into $6$ photons have been realized with a fidelity of $70.8 \pm 1.6\,\%$~\cite{wang_18_qubit_2018}, and $14$-photon GHZ states have been demonstrated with a fidelity of $76 \pm 6\,\%$~\cite{thomas_efficient_generation_2022}.
In trapped-ion systems, GHZ states of up to $32$ qubits have been realized with a fidelity of $82 \pm 1 \,\%$~\cite{moses_a_race_2023}.
In cold-atom platforms, a fidelity of $54.2 \pm 1.8\,\%$ %was 
has been reported for $20$ physical qubits~\cite{omran_generation_of_2019}.
Using a similar platform, GHZ states of $4$ logical qubits have been realized with fidelities of $72 \pm 2\,\%$ using deterministic error correction and up to $99.85^{+0.1}_{-1.0}\,\%$ using probabilistic error detection~\cite{bluvstein_logical_quantum_2024}.
Finally, in superconducting qubit systems, GHZ states have been realized with a fidelity of $51.9 \pm 1.4\,\%$ for $32$ qubits using deterministic preparation~\cite{kam_characterization_of_2024}, and with up to $120$ qubits achieving a fidelity of $56 \pm 3\,\%$ using probabilistic error-detection techniques~\cite{javadiabhari_big_cats_2025}.
On the theoretical side, one can show~\cite{tanizawa_simplest_fidelity_2023} that the fidelity, $F_\text{GHZ}(\varepsilon) = \bra{\text{GHZ}^n} \mathcal{D}_\varepsilon^{\otimes n}[\Psi_\text{GHZ} ] \ket{\text{GHZ}^n}$,
of the locally depolarized GHZ state is given by
% \begin{eqnarray}%
% &&\bra{\text{GHZ}^n} \mathcal{D}_\varepsilon^{\otimes n}[\Psi_\text{GHZ} ] \ket{\text{GHZ}^n}\nonumber\\
% &=& \tfrac{1}{2}\left[\left(1 -\tfrac{\varepsilon}{2}\right)^n  + \left(1 -\varepsilon\right)^n -  \left(\tfrac{\varepsilon}{2}\right)^n \right] \, .
% \end{eqnarray}
\begin{align}
    F_\text{GHZ}(\varepsilon)
= \tfrac{1}{2}\left[\left(1 -\tfrac{\varepsilon}{2}\right)^n  + \left(1 -\varepsilon\right)^n -  \left(\tfrac{\varepsilon}{2}\right)^n \right]
\, . 
\end{align}
Therefore, it decays exponentially for any fixed value of  $\varepsilon > 0$.
In Fig.~\ref{fig:ghz_loc_depol_vs_n}, we plot 
the noise threshold $\varepsilon_\text{max}^\text{fidelity}$ (orange circles),
defined as the value at which the fidelity drops below $50\,\%$.
Unlike $\varepsilon_\text{max}^\text{PPT}$, we observe that the fidelity threshold  scales as $\varepsilon_\text{max}^\text{fidelity}  = O(n^{-1})$ and rapidly converges to zero, reaching $\varepsilon_\text{max}^\text{fidelity} \approx  10^{-2}$ at $n=100$ qubits.
We stress the consistency of this simple noise model with experimental state-of-the-art demonstrations, which exhibit limiting error rates and maximum system sizes on the order of $\varepsilon=10^{-2}$ and $n=100$, respectively~\cite{javadiabhari_big_cats_2025}.

% The GHZ state is further distinguished by maximizing the $n$-body Shor--Laflamme quantum weight enumerator~\cite{shor_quantum_analog_1997, tran_quantum_entanglement_2015, eltschka_maximum_nbody_2020},
% \begin{align} \label{eq:n_body_sector_length}
%     a_n[\rho]  = \frac{1}{2^n} \sum_{P \in \{X,Y,Z\}^{\otimes n}} \Tr[\rho P ]^2 \, ,
% \end{align}
% attaining $a_n[\Psi_\text{GHZ}] = 0.5 + 2^{-n}\delta_{n,\text{even}}$.
% Since every state $\rho$ with $a_n[\rho]>2^{-n}$ is entangled~\cite{kloeckl_characterizing_multipartite}, 
% this leads implies a  noise threshold $\varepsilon_\text{max}^{n\text{-SL}} = 1 - 2^{-\tfrac{1}{2}+\tfrac{1}{2n}}$ if $n$ is odd, and similarly if $n$ is even~\cite{miller_shor_laflamme_2023}.
% As shown in Fig.~\ref{fig:ghz_loc_depol_vs_n}, the $n$-body criterion (pink pentagons) is weaker than the PPT criterion.
% Still, it converges to a nonzero threshold, namely $1 - 1/\sqrt{2} \approx 0.2929$.
% In a recent two-copy experiment employing pairwise Bell measurements, a value of $a_n[\rho] = 0.32 \pm 0.02$ was measured for $n = 6$ qubits,
% which certifies entanglement~\cite{miller_experimental_measurement_2024}.
% However, as the number of qubits increases, experimentally estimating Eq.~\eqref{eq:n_body_sector_length} becomes increasingly demanding for GHZ states, requiring exponentially many samples to achieve a fixed additive precision~\cite{miller_experimental_measurement_2024}.

Experimental implementations of \emph{two-copy purity measurements} using GHZ states have, to our knowledge, only been realized in an early photonic demonstration~\cite{bovino_direct_measurement_2005} and a more recent trapped-ion platform~\cite{miller_experimental_measurement_2024}.
In Ref.~\cite{bovino_direct_measurement_2005}, an experiment with two photonic Bell pairs (i.e., GHZ states on $n=2$ qubits) reported a global purity $\mathrm{Tr}[\rho^2] = 0.90 \pm 0.02$ and a single-qubit subsystem purity $\mathrm{Tr}[\rho_A^2] = 0.43 \pm 0.02$, thereby certifying entanglement.
Similarly, in Ref.~\cite{miller_experimental_measurement_2024}, 
the global purity of a six-qubit GHZ state realized on a trapped-ion device was measured to be
$\Tr[\rho^2] = 0.62\pm0.01$, together with an averaged three-qubit subsystem purity of
% $a'_{5} [\rho ] = 0.39 \pm 0.01$.
\begin{align} \label{eq:apd_experiment}
   \frac{1}{\binom{6}{3}}\sum_{\substack{A\subset\{1,\ldots, 6\}\\ \#A = 3}} \Tr[\rho_A^2] = 0.43 \pm 0.01 \, ,
\end{align}
where the sum runs over all three-qubit subsystems.
Since the observed subsystem purities are smaller than the global purity, the state is entangled across a bipartition of size  $3\vert 3$.
Next, we show---for general $n$-qubit GHZ states---how to compute the theoretical threshold $\varepsilon_\text{max}^\text{purity}$ below which the purity criterion certifies entanglement across a balanced bipartition.
We leverage the concept of Shor--Laflamme enumerators,
which have their origins in the theory of quantum error correction~\cite{shor_quantum_analog_1997}.
They are defined as 
\begin{align} \label{eq:shor_laflamme_enumerators}
    a_i[\rho] = \frac{1}{2^n} \sum_{\substack{P \in \{I,X,Y,Z\}^{\otimes n}\\ \wt(P) = i}} \Tr[\rho P ]^2 \, ,
\end{align}
for each $i\in\{0,\ldots,n\}$,
where the weight $\wt(P)$ of a Pauli operator $P$ is defined as the number of non-identity tensor factors.
For a stabilizer state $\Psi = \ket{\psi}\!\bra{\psi}$ with stabilizer group $\mathcal{S}$, one has $a_i[\Psi] = \#\{S \in \mathcal{S} \ \vert \ \wt(S) = i\}/2^n$, 
i.e., the probability that a uniformly random stabilizer element has weight $i$. 
In particular, we have $a_i[\Psi_\text{GHZ}] = (\binom{n}{i}\delta_{i,\text{even}} + \delta_{i,n})/2^n$ for the $n$-qubit GHZ state~\cite{aschauer_local_invariants_2004}.
Since Shor--Laflamme enumerators generally decay as $a_i[\mathcal{D}_\varepsilon^{\otimes n}[ \Psi]] = (1-\varepsilon)^{2i} a_i[\Psi]$, 
one can write the global purity of the noisy state as
\begin{align} \label{eq:purity_decay}
    \Tr\left[ (\mathcal{D}_\varepsilon^{\otimes n}[\Psi])^2 \right] = \sum_{i=0}^n  a_i[\Psi] (1-\varepsilon)^{2i} \, .
\end{align}
Similarly, the averaged (over all marginals of size $n/2$) subsystem purity (see Eq.~\eqref{eq:apd_experiment} for an example)
\begin{align}
    a'_{n/2} [\rho ] =   \frac{1}{\binom{n}{n/2}} \sum_{\substack{A\subset\{1,\ldots,n\}\\ \# A = n/2}} \Tr \left[  \rho_A^2 \right] \,,
\end{align}
which is also known as Rains' unitary enumerator~\cite{rains_quantum_weight_1998} of order $n/2$, 
can be expressed through Shor--Laflamme enumerators via
\begin{align}
    a'_{n/2}\left [  \mathcal{D}_\varepsilon^{\otimes n}[\Psi]  \right] = 
\frac{2^{{n}/{2}}}{\binom{n}{{n}/{2}}}
\sum_{i=0}^n     \binom{n-i}{{n}/{2}} a_i[\Psi] (1-\varepsilon)^{2i}  \, .
\end{align}
In Fig.~\ref{fig:ghz_loc_depol_vs_n}, we plot the noise threshold $\varepsilon_\text{max}^\text{purity}$   (purple diamonds)
below which  
\begin{align}
a'_{n/2}\left [  \mathcal{D}_\varepsilon^{\otimes n}[\Psi_\text{GHZ}]  \right]< 
    \Tr\left[ (\mathcal{D}_\varepsilon^{\otimes n}[\Psi_\text{GHZ}])^2 \right] \, .
    \end{align}
Whenever $\varepsilon<\varepsilon_\text{max}^\text{purity}$, the noisy state $\mathcal{D}_\varepsilon^{\otimes n}[\Psi_\text{GHZ}]$ is entangled across each bipartition $A\vert B$ with $\#A=\#B$.
We observe that this purity criterion performs slightly worse than the fidelity criterion, exhibiting the same $O(n^{-1})$ scaling.
Since the fidelity criterion certifies entanglement across all bipartitions, is easier to implement, and has a higher noise threshold, 
our analysis indicates that purity-based approaches may be less suitable for experiments aimed at certifying entanglement across a \emph{balanced} bipartition in GHZ states.
Note, however, that the noise threshold for the purity criterion is much higher than the fidelity threshold (albeit still suffering from an $O(n^{-1})$ scaling) if, instead of the balanced bipartition, a bipartition $A\vert B$ of size $\#A=1$ versus $\#B=n-1$ is probed~\cite{miller_shor_laflamme_2023}.
Similarly, the purity criterion performs significantly better (exhibiting an $O(1)$ scaling) when applied to cluster states instead~\cite{miller_shor_laflamme_2023}.

Having discussed criteria accessible via single- and two-copy experiments (fidelity and purity), we now turn to higher-order PT moments.  
From Eq.~\eqref{eq:eigenvalues_pt_ghz_1} and~\eqref{eq:eigenvalues_pt_ghz_2}, we know the eigenvalues $\lambda_j$ and multiplicities $\mu_j$ of $\mathcal{D}_\varepsilon^{\otimes n}[\Psi_\text{GHZ}^\Gamma]$.  
This allows us to express
\begin{align}
    p_k\!\left[\mathcal{D}_\varepsilon^{\otimes n}[\Psi_\text{GHZ}^\Gamma]\right]
    = \sum_{j \in \{0,1,\ldots,n,\pm\}} \mu_j \lambda_j^k \, ,
\end{align}
and thereby compute the corresponding noise thresholds for higher-order relaxations of the PPT criterion.
We restrict our analysis to $k \le 7$, as measuring $p_k$ becomes significantly more challenging with increasing moment order.  
In Fig.~\ref{fig:ghz_loc_depol_vs_n}, crosses, downward-pointing triangles, and upward-pointing triangles denote the cases where the highest required order is $m = 3$, $5$, and $7$, respectively, while the color indicates the corresponding family of criteria.

We first consider the family of \emph{Descartes criteria}~\cite{neven_symmetry_resolved_2021, bradshaw_a_closed_2025}.
In Fig.~\ref{fig:ghz_loc_depol_vs_n}, we see that
their noise thresholds (red curves) are not able to overcome the $O(n^{-1})$ scaling from which also the fidelity and purity thresholds suffer.
Notably, the Descartes criterion for $m=3$ (red crosses) performs even worse than the purity criterion.
As expected~\cite{neven_symmetry_resolved_2021}, 
the thresholds increase with the maximum moment order $m$.
Indeed, already the Descartes-$5$ criterion can tolerate more noise than the fidelity criterion.

Next, we consider the family of \emph{Stieltjes criteria}~\cite{neven_symmetry_resolved_2021, yu_optimal_entanglement_2021}, and plot their noise thresholds in Fig.~\ref{fig:ghz_loc_depol_vs_n}.
For $m=3$ (blue crosses), the Stieltjes criterion 
is equivalent to the $p_3$-PPT criterion~\cite{elben_mixed_state_2020},
and we see that its noise threshold suffers from the same $O(n^{-1})$ scaling.
However, we also observe that the $p_3$-PPT criterion is slightly stronger than both the fidelity and the purity criterion.
For $m=5$ and $m=7$ (blue triangles), we find a better scaling of
$\varepsilon_\text{max}^\text{Stieltjes} = O(n^{-\alpha})$, where we extract
$\alpha = 0.84\pm0.4$ from the plot.
Notably, the Stieltjes-$7$ criterion can tolerate an effective per-qubit error rate exceeding $10^{-2}$ even for GHZ states with $n=1000$ qubits.
We attribute the stronger performance of the Stieltjes-$m$ compared to the Descartes-$m$ criterion to the fact that the latter tests only $m$ polynomials, whereas the former imposes a larger number of constraints, corresponding to all principal minors.

As discussed in the introduction, a drawback of the Stieltjes-$m$ criterion from an experimental perspective is that it requires knowledge of the PT moments $p_k$ for all $k\le m$.
By contrast, our $(k,l,m)$-PPT criterion from Thm.~\ref{thm:ppt_relaxation_hoelder} requires only $p_k$, $p_l$, and $p_m$.
We compute the noise thresholds for some of the best combinations of $k$, $l$, and $m$
and plot them in Fig.~\ref{fig:ghz_loc_depol_vs_n}  (green curves).
We observe that, for a given value of $m$,
the $(k,l,m)$-PPT thresholds are lower than those of the full Stieltjes-$m$ criterion.
This is not unexpected, as the latter exploits more information.
Nevertheless, the $(k,l,m)$-PPT thresholds retain the $O(n^{-\alpha})$ scaling of the Stieltjes-$m$ criterion for $m\ge5$.
In particular, they outperform the Descartes-$m$ criterion, despite leveraging less information.
Interestingly, we find that taking consecutive moments, e.g., $(k,l,m)=(3,4,5)$ or $(k,l,m)=(5,6,7)$, performs better than, e.g., $(k,l,m)=(3,6,7)$, as shown by the green hexagons in Fig.~\ref{fig:ghz_loc_depol_vs_n}.
Overall, we find that optimal performance is obtained when $k$ and $m$ are odd and $l$ is even.

To illustrate the modest savings offered by the $(k,l,m)$-PPT criterion, consider a device with $n=300$ qubits (per copy) and an effective per-qubit error rate of $\varepsilon=10^{-2}$.
Since the Stieltjes-$5$ and $(3,4,5)$-PPT criteria have nearly identical thresholds of approximately $0.016$, both are capable of detecting entanglement in this scenario. 
However, applying the $(3,4,5)$-PPT criterion does not require measuring the state's purity, $p_2$, thereby reducing the experimental overhead by roughly $25\,\%$.

{\add

Finally, let us emphasize that Fig.~\ref{fig:ghz_loc_depol_vs_n} addresses the abstract problem of comparing different entanglement criteria in a physically motivated setting.
More precisely, it shows up to which noise level each criterion can, in principle, certify entanglement in the idealized scenario of perfect measurements.
Extending this analysis to account for realistic measurement imperfections is beyond the scope of the present work.

%% Local shadows:
% Mathematically, erroneous measurements (modeled by a bitflip channel on the classical measurement outcome) are equivalent to local depolarizing noise.
% Therefore, one cannot distinguish these error mechanisms in an experiment.

}

\section{Weight enumerators for the decay of PT moments under local depolarizing noise}
\label{sec:new_enumerators}

In the previous section, we analyzed the performance of moment-based entanglement criteria in a concrete setting. 
We now take a more general perspective and investigate how PT moments evolve under local depolarizing noise
{\add for arbitrary $n$-qubit states.
Understanding this evolution is important both for analyzing the robustness of moment-based entanglement criteria and for obtaining analytical expressions for PT moments in noisy quantum states.}
To this end, we introduce a family of quantum weight enumerators (QWEs) that capture this decay and naturally extend the well-known Shor--Laflamme QWEs, which were first defined in the context of quantum error correction~\cite{shor_quantum_analog_1997}.

{ \add 
As an intuitive motivation, recall from Eq.~\eqref{eq:purity_decay}
that the purity (i.e., the second moment) of a pure state $\Psi = \ket{\psi}\!\bra{\psi}$  under local depolarizing noise
behaves like a mixture of exponential decays with different rates.
How the initial purity is distributed among the different decay modes is determined by the Shor--Laflamme QWEs. 
The coefficient $a_i[\Psi]$ quantifies the fraction of the purity supported on Pauli components of weight $i$, each of which is suppressed as $(1-\varepsilon)^{2i}$.
Thus, the purity decay is not described by a single decay rate, but by a weighted sum of decay modes indexed by Pauli weight.
We refer the interested reader to the interactive visualization tool ``GraphStateVis'',   where such purity decay effects can be explored for the class of graph states~\cite{miller_graphstatevis_interactive_2021}.

}

{\add We now show that an analogous decomposition exists for partial transpose moments.}
Let $\rho$ be an $n$-qubit state, and let $\mathcal{D}^{\otimes n}_\varepsilon[\rho]$ denote its evolution under local depolarizing noise.
We can write 
\begin{align}
     \mathcal{D}^{\otimes n}_\varepsilon[\rho^\Gamma ] = 
     \tfrac{1}{2^n}
     \hspace{-4mm}
     \sum _{P \in \{I,X,Y,Z\}^{\otimes n}}
     \hspace{-4mm}
     \vartheta(P) \Tr[\rho P] (1-\varepsilon)^{\wt(P)} P \, ,
\end{align}
where for an $n$-qubit Pauli operator, $P=P^{(1)}\otimes\ldots\otimes P^{(n)}$, %$ \in\{I,X,Y,Z\}^{\otimes n}$, 
we define
\begin{align} \label{eq:ppt_signflip_function}
\vartheta(P) = (-1)^{\#\{\, a\in A \  \vert \ P^{(a)}=Y\}} \, .
\end{align}
For multiple Pauli operators, $P_1,\ldots, P_k \in \{I,X,Y,Z\}^{\otimes n}$, 
we similarly define 
$\vartheta(P_1,\ldots,P_k) = \prod_{i=1}^k \vartheta(P_i)$ 
as well as
\begin{align}
    \varphi(P_1,\ldots,P_k) = \frac{1}{2^n} \Tr[\textstyle\prod_{i=1}^k P_i] \in\{0, \pm 1, \pm \imag\} \, .
\end{align}
With this notation at hand, we can write
\begin{widetext}
\begin{align} 
    { \add 
    p_k[\mathcal{D}^{\otimes n}_\varepsilon[\rho]] 
    =
    \Tr\left[((\mathcal{D}^{\otimes n}_\varepsilon[\rho])^{\Gamma})^k\right]
    }
    =  \sum_{P_1,\ldots,P_k \in \{I,X,Y,Z\}^{\otimes n}} \frac{\vartheta(P_1,\ldots,P_k) 
    \varphi(P_1,\ldots, P_k)}{2^{n(k-1)}} 
     \prod_{i=1}^k(1-\varepsilon)^{\wt(P_i)} \Tr[P_i \rho] \, .
     \label{eq:pk_local_without_enums}
\end{align}
\end{widetext}
Stratifying Eq.~\eqref{eq:pk_local_without_enums} by all possible values that $\vartheta(P_1,\ldots,P_k)$, $\varphi(P_1,\ldots,P_k) $, and $\wt(P_i)$ can assume, we finally obtain
\begin{align} \label{eq:pk_local_with_enums}
    p_k[(\mathcal{D}^{\otimes n}_\varepsilon[\rho])^{\Gamma}] 
    = 
    \sum_{\substack{\theta\in\{\pm1\} \\  \phi\in\{\pm 1, \pm \imag\}}} 
    \theta \phi \sum_{w=0}^{nk}(1-\varepsilon)^{w}  c_w^{(k,\theta, \phi)}[\rho] \, ,
\end{align}
where we have introduced the real-valued QWEs
\begin{align} \label{eq:new_enumerator}
     c_w^{(k,\theta, \phi)}[\rho]  = \frac{1}{2^{n(k-1)}} 
     \sum_{\substack{P_1,\ldots, P_k \in  \{I,X,Y,Z\}^{\otimes n}\\ 
     \vartheta(P_1,\ldots,P_k)=\theta \\
     \varphi(P_1,\ldots,P_k)=\phi \\
     \sum_{i=1}^k \wt(P_i)=w 
     }}
      \prod_{i=1}^k\Tr[\rho P_i]
     \, ,
\end{align}
which implicitly, via Eq.~\eqref{eq:ppt_signflip_function}, depend on the bipartition $A\vert B$ with respect to which the partial transpose is taken.
For $k=2$, our QWEs are directly related to the well-known
Shor--Laflamme QWEs from Eq.~\eqref{eq:shor_laflamme_enumerators}
via 
$ c_{w}^{(2,\theta,\phi)} [\rho] = \delta_{\theta,+1}\delta_{\phi,+1} \delta_{w,2i} \times a_i[\rho]$,
regardless of the bipartition.
For $k \ge 3$, the QWEs in Eq.~\eqref{eq:new_enumerator} belong to the general class of local polynomial invariants~\cite{rains_polynomial_invariants_2000, grassl_computing_local_1998}, which is less well understood than the $k=2$ case.

Let us investigate the important special case of stabilizer states in greater detail.
For every $n$-qubit stabilizer state $\Psi_\text{stab}=\sum_{S\in \mathcal{S}}S/2^n$,  
where $\mathcal{S}$ denotes the stabilizer group,
the condition $-\mathbbm 1 \not \in \mathcal{S}$ implies that  $c_w^{(k,\theta, \phi)} = 0$ is trivial whenever $\phi\in\{-1, \pm\imag\}$.
In the opposite case,
we have $c_w^{(k,\theta, +1)} =  C_w^{(k,\theta)}/2^{n(k-1)}$, where
\begin{align}
      \nonumber
    C_w^{(k, \theta)} = \# \Biggl\{& S_2,   
    \ldots, S_{k} \in \mathcal{S} 
    \ \Bigg\vert \  
     \vartheta(\textstyle \prod_{i=2}^{k} S_i , S_2, \ldots, S_{k}) = \theta, \\ 
      &  \wt(\textstyle \prod_{i=2}^{k} S_i)+\sum_{i=2}^{k}\wt(S_i) = w \Biggr\} \, .
\label{eq:new_enumerator_stab} 
\end{align}
%%%% same equation without linebreak
% \begin{widetext}
%     \begin{align}
%     C_w^{(k, \theta)} = \# \Biggl\{& S_2,   
%     \ldots, S_{k} \in \mathcal{S} 
%     \ \Bigg\vert \  
%      \vartheta(\textstyle \prod_{i=2}^{k} S_i , S_2, \ldots, S_{k}) = \theta,  
%        \hspace{2mm}  \wt(\textstyle \prod_{i=2}^{k} S_i)+\sum_{i=2}^{k}\wt(S_i) = w \Biggr\} \, .
% \label{eq:new_enumerator_stab} 
% \end{align}
% \end{widetext}
Therefore, Eq.~\eqref{eq:pk_local_with_enums} simplifies to 
\begin{align} \label{eq:pk_local_with_enums_stab}
    p_k[(\mathcal{D}_\varepsilon^{\otimes n}[\Psi_\text{stab}])] = 
    \sum_{w=0}^{nk} \frac{ C_w^{(k, +)} -  C_w^{(k, -)}}{2^{n(k-1)}}  (1-\varepsilon)^{w} \, .
\end{align}
In the important special case of a trivial bipartition $A\vert B$ with $A=\diameter$,
we have $C_w^{(k,-)} =0$ and, therefore, the moments   of $\rho_\varepsilon=\mathcal{D}_\varepsilon^{\otimes n}[\Psi_\text{stab}]$ can be written as
\begin{align} \label{eq:decay_tr_rho_k}
    \Tr\left[ \rho_\varepsilon^k\right] = \frac{1}{2^{n(k-1)}} \sum_{w=0}^{nk} C_w^{(k,+)} (1-\varepsilon)^w \, .
\end{align}
Using Eq.~\eqref{eq:new_enumerator_stab}, the QWEs $ C_w^{(k, \theta)}$, which appear in Eq.~\eqref{eq:pk_local_with_enums_stab}, can in principle be computed via an exhaustive search over all $(k-1)$-tuples of stabilizers.
However, this approach leads to an algorithm with runtime that is exponential in both $n$ and $k$. 
As we show next, by leveraging the theory of Fourier analysis on finite Abelian groups~\cite{terras_fourier_analysis_1999}, the exponential runtime in $k$ can be avoided.

Consider a fixed stabilizer state with stabilizer group, $\mathcal{S}$, and a fixed bipartition, $A\vert B$, which determines the function $\vartheta$ from Eq.~\eqref{eq:ppt_signflip_function}.
For each $z\in \CC$, we define the two functions $f^\pm_z: \mathcal{S} \rightarrow \CC$, where $ f^+_z(S)= z^{\wt(S)}$ and $f^-_z(S) =\vartheta(S)z^{\wt(S)}$.
Evaluating the $k$-fold discrete convolution of $f_z^\pm$ at the identity element $I\in\mathcal{S}$ yields
\begin{align}
     (f_z^\pm)^{\scalebox{1.0}{$\ast$} k} (I ) = \sum_{w=0}^{nk} (C_w^{(k,+)} \pm C_w^{(k,-)}) z^w \,.
\end{align}
The Fourier transform, $\mathcal{F}$, over the finite group $\mathcal{S}$ turns convolutions into pointwise products, i.e., $\mathcal{F} ((f_z^\pm)^{\scalebox{1.0}{$\ast$} k} ) = (\mathcal{F}(f^\pm_z))^k$, where $\widehat{f}^\pm_z =\mathcal{F}(f^\pm_z) $ are functions from the group of characters, 
\begin{align}
\widehat {\mathcal{S}} = \{ \chi : \mathcal{S} \rightarrow  \{\pm1\} \ \vert \  \chi \, \text{ is a group homomorphism} \} \, ,
\end{align}
to $\CC$.
More precisely,  for each $\chi \in \widehat{\mathcal{S}}$, we have $\widehat f_z^+ (\chi) = \sum_{S \in \mathcal{S}} \chi(S) z^{\wt(S)} $ and  $\widehat f^-_z (\chi) = \sum_{S \in \mathcal{S}} \chi(S) \vartheta(S)z^{\wt(S)} $.
Note that $\vartheta$ itself is not a character (unless $A=\diameter$) because $\vartheta(X_a) = \vartheta(Z_a) = 1 $ but $\vartheta(X_aZ_a) =-1$ for each qubit $a\in A$.
Let 
\begin{equation}
\{\alpha^\pm_1(z),\ldots, \alpha^\pm_{N^\pm}(z) \} = \{ \widehat f^\pm_z (\chi) \ \vert \  \chi \in \widehat{\mathcal{S}} \} 
\end{equation}
be the sets of $N^\pm$ polynomials that occur,
and denote their multiplicities by $m^\pm_i = \# \{\chi \in \widehat{\mathcal{S}} \ \vert \ \widehat{f}^\pm_z(\chi) = \alpha^\pm_i(z)\}$.
The inversion formula, see e.g., Thm.~2 in Chap.~10 of Ref.~\cite{terras_fourier_analysis_1999}, allows us to write 
\begin{align} \label{eq:inversion_formula}
     (f^\pm_z)^{\scalebox{1.0}{$\ast$} k} (S) = \frac{1}{2^n} \sum_{\chi \in \widehat{\mathcal{S}}} (\widehat f^\pm _z (\chi))^k \chi (S) .
\end{align}
for each $S\in \mathcal{S}$.
Evaluating Eq.~\eqref{eq:inversion_formula} at $S=I$ finally yields
\begin{align} \label{eq:enums_solution}
    \sum_{w=0}^{nk} (C_w^{(k,+)} \pm C_w^{(k,-)}) z^w = \frac{1}{2^n} \sum_{i=1}^{N^\pm}  m_i^\pm  \times (\alpha_i^\pm(z))^k  \, .
\end{align}
We can extract $C_w^{(k,+)} + C_w^{(k,-)}$ and $ C_w^{(k,+)} - C_w^{(k,-)}$
from Eq.~\eqref{eq:enums_solution} by taking the coefficient in front of $z^w$.
This allows us to compute $C_w^{(k,+)}$ and $C_w^{(k,-)}$ from the $m^\pm_i $ and $\alpha_i^{\pm}$, which themselves do not depend on $k$.

% \subsection{Partial transpose enumerators of the six-qubit AME state}
% \label{sec:ame_enumerators}

\section{Noise thresholds and   enumerators for an absolutely maximally entangled state}
% \section{Thresholds for absolutely maximally entangled states}
\label{sec:ame_thresholds}

Let us demonstrate our new theoretical framework from Sec.~\ref{sec:new_enumerators} using the example of the $6$-qubit {\emph{absolutely maximally entangled}} (AME) state,
{with the state vector}
\begin{align} \label{eq:ame_definition}
    \ket{\text{AME}} = (\ket{0}\otimes\ket{\bar 0} + \ket{1}\otimes\ket{\bar 1})/\sqrt{2} \, ,
\end{align}
where $\ket{\bar 0}$ and $\ket{\bar1}$ denote the logical basis states of the $\llbracket 5,1,3\rrbracket$ code~\cite{laflamme_perfect_quantum_1996}.
We aim to compute the QWEs, $C^{(k,\theta)}_w$, from Eq.~\eqref{eq:new_enumerator_stab} for all bipartitions $A\vert B$ and moment orders $k\ge2$.
For $k=2$, we already know $C^{(2,\theta)}_w = \delta_{\theta, 1}\delta_{w,2i}  a_i[\Psi_\text{AME}] 2^6$, 
where $\mathbf{a}[\Psi_{\text{AME}}] = (1,0,0,0,45,0,18)/2^6$
is the vector of Shor--Laflamme enumerators
of $\Psi_\text{AME}=\ket{\text{AME}}\!\bra{\text{AME}}$.
These enumerators have recently been measured in a two-copy Bell sampling experiment~\cite{miller_experimental_measurement_2024}, and they play a central role in Gleason's theorem, see App.~\ref{app:gleasons_theorem} for details.

\begin{figure*} 
    \centering
    \includegraphics[width=\textwidth]{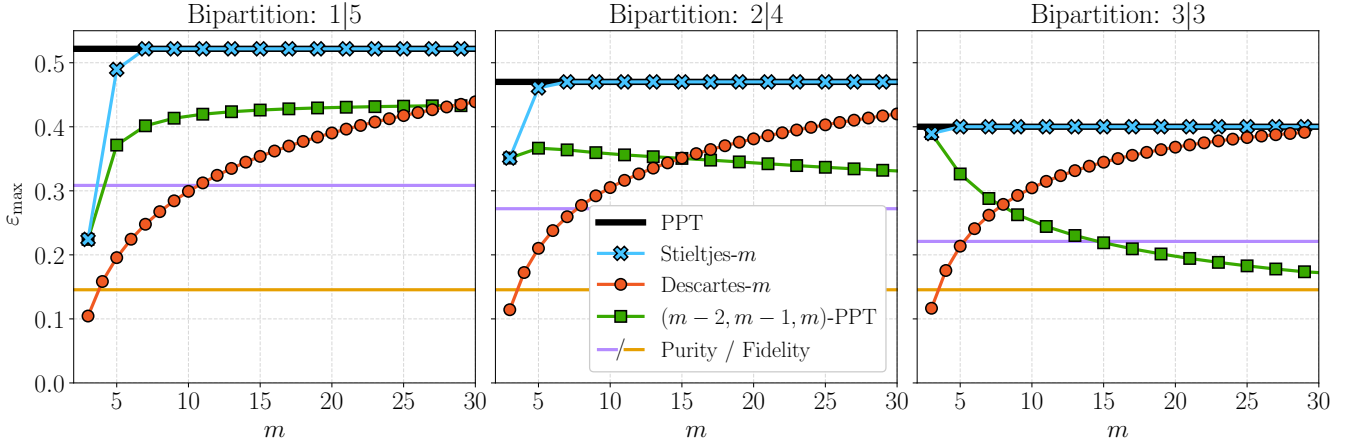}
    \caption{Noise thresholds, $\varepsilon_\text{max}$, below which the locally depolarized six-qubit AME state, $\mathcal{D}_\varepsilon^{\otimes n} [\Psi_\text{AME}]$, is detected by a given entanglement criterion for the three bipartitions of sizes $1\vert5$, $2\vert 4$, and $3\vert 3$. 
    For the Stieltjes-$m$, the Descartes-$m$ and the $(m-2,m-1,m)$-PPT criterion, we show the behavior as a function of the maximum moment order, $m$. 
    All other criteria are shown as constant reference lines.
    The comparison highlights that the Descartes-$m$ criterion performs significantly worse than the corresponding Stieltjes-$m$ criterion.
    The most promising candidates for entanglement detection experiments  are the fidelity criterion (single copies), the purity criterion (two copies), the Stieltjes-$3$, -$5$, and -$7$ criteria.
    For the bipartition $1\vert5$, the $(3,4,5)$-PPT criterion can also be the method of choice, as it is stronger than the Stieltjes-$3$  while requiring less experimental data than the Stieltjes-$5$ criterion.
    }
    \label{fig:ame_loc_depol_thresholds}
\end{figure*}

The essence of Eq.~\eqref{eq:enums_solution} lies in the fact that $C_w^{(k,\theta)}$ 
can be computed efficiently once suitable polynomials and their multiplicities have been identified.
In this section, we slightly modify the notation.
In contrast to Eq.~\eqref{eq:enums_solution}, we write $\alpha_i(z)$ and $\beta_i(z)$ instead of $\alpha_i^-(z)$ and $\alpha_i^+(z)$, and denote their multiplicities by $m_i$ and $n_i$ rather than $m_i^-$ and $m_i^+$.
Since $\hat{f}^-_z(\chi) = \sum_{S\in \mathcal{S}} \chi(S) \vartheta(S) z^{\wt(S)}$ depends on the bipartition $A\vert B$ 
with respect to which the partial transposition is taken (via $\vartheta$),
the same holds for the polynomials $\alpha_i(z)$ and their multiplicities $m_i$.
Due to the symmetry of the AME state, however, they depend only on the size of the bipartition.
Accordingly, we write, e.g., $\alpha_i^{3 \vert 3}(z)$ and $m_i^{3 \vert 3}$ to denote the case of a balanced bipartition such as $A=\{1,2,3\}$ and $B=\{4,5,6\}$.
The polynomials $\beta_i$ and their multiplicities $n_i$, on the other hand, are independent of the specific bipartition under consideration.
A brute-force search reveals that they are given by
\begin{align}
    \beta_1(z) & = 1+ 45z^4+ 18z^6,  \\
    \beta_2(z) &=  1+ 5z^4-6z^6,\, \text{ and}     \\
    \beta_3(z) &=  1 -3z^4+2z^6\, ,
\end{align}
with multiplicities $n_1=1$, $n_2= 18$, and $n_3 = 45$.
The fact that only $3$ out of, in principle, $64$ possible polynomials arise underscores the power of the theoretical framework developed in Sec.~\ref{sec:new_enumerators}.
Note that $\beta_1(z) = \widehat{f}_z^- (\chi_{\text{triv}})$ is nothing other than the Shor--Laflamme weight enumerator polynomial of the AME state, which corresponds to the trivial character $\chi_{\text{triv}} = 1 \in \widehat{\mathcal{S}}$.
Irrespective of the bipartition $A\vert B$, we therefore obtain
\begin{align}
%C^{(k,+)}_w + C^{(k,-)}_w    =  [z^w]  \frac{(1+ 45z^4+ 18z^6)^k +  18(1+ 5z^4-6z^6)^k + 45(1 -3z^4+2z^6)^k}{64}  \, ,
C^{(k,+)}_w &+ C^{(k,-)}_w    =  \frac{[z^w]}{64}  
\biggl(
(1+ 45z^4+ 18z^6)^k \\
&+  18(1+ 5z^4-6z^6)^k + 45(1 -3z^4+2z^6)^k 
\biggr)
\, , \nonumber
\end{align}
where $[z^w]g(z)$ denotes the coefficient of $z^w$ in a polynomial $g(z)$.
The differences, 
\begin{align} \label{eq:ame_differences}
     C^{(k,+)}_w - C^{(k,-)}_w =  [z^w] \frac{1}{2^n}\sum_{i} m_i \alpha_i(z)^k \, ,
\end{align}
on the other hand, depend on the chosen bipartition. 
There are four cases:
\begin{itemize}
    \item[(i)] 
    For $A= \diameter$, we have $\vartheta^{0\vert6}= \chi_\text{triv}$, which implies $\alpha^{0\vert6}_i(z) = \beta(z)$ and $m_i ^{0\vert6}= n_i$ for all $i\in\{1,2,3\}$. 
    In other words, we have $C_w^{(k,-)}=0$, as previously noted in Eq.~\eqref{eq:decay_tr_rho_k}.
    \item[(ii)] 
    For every bipartition of size $1\vert 5$, we find  
    $\alpha_1^{1\vert5}(z) = 6z^6 + 25z^4 + 1$,
    $\alpha_2^{1\vert5}(z) = -2z^6 + z^4 + 1$,
    $\alpha_3^{1\vert5}(z) = 6z^6 - 7z^4 + 1$, and 
    $\alpha_4^{1\vert5}(z) = -18z^6 - 15z^4 + 1$.
    The multiplicities  are given by
    $m_1^{1\vert5} = 3$, $m_2^{1\vert5} = 45$, $m_3^{1\vert5} = 15$, and $ m_4^{1\vert5} = 1$. 
    \item[(iii)] For every bipartition of size $2\vert 4$, we find  
$\alpha_1^{2\vert4}(z) = 2z^6 + 13z^4 + 1$,
$\alpha_2^{2\vert4}(z) = -6z^6 + 5z^4 + 1$,
$\alpha_3^{2\vert4}(z) = 2z^6 - 3z^4 + 1$,
$\alpha_4^{2\vert4}(z) = -6z^6 - 11z^4 + 1$, and 
$\alpha_5^{2\vert4}(z) = 18z^6 - 3z^4 + 1$.
The multiplicities are given by
$m_1^{2\vert4} = 9$, $m_2^{2\vert4} = 12$, $m_3^{2\vert4} = 36$, $m_4^{2\vert4} = 6$, $m_5^{2\vert4} = 1$. 
    \item[(iv)] 
For every bipartition of size $3\vert 3$, we find  
$\alpha_1^{3\vert3}(z) = -2z^6 + 9z^4 + 1$,
$\alpha_2^{3\vert3}(z) = 6z^6 + z^4 + 1$,
$\alpha_3^{3\vert3}(z) = -2z^6 - 7z^4 + 1$, and 
$\alpha_4^{3\vert3}(z) = -18z^6 + 9z^4 + 1$.
The multiplicities are given by
$m_1^{3\vert3} = 18$, $m_2^{3\vert3} = 18$, $m_3^{3\vert3} = 27$, and $m_4^{3\vert3} = 1$.
\end{itemize}

The differences in Eq.~\eqref{eq:ame_differences} are all that is needed to compute the moments $p_k$ of the locally depolarized AME state, recall Eq.~\eqref{eq:pk_local_with_enums_stab}.
The behavior of $p_k$, in turn, is relevant for moment-based relaxations of the PPT entanglement criterion.
We find that the Stieltjes-$m$ criterion (blue crosses in Fig.~\ref{fig:ame_loc_depol_thresholds}) is extremely powerful.
It reproduces the noise thresholds of the abstract PPT criterion,
$\varepsilon^\text{PPT}_\text{max}(1\vert5) \approx 0.52$,
$\varepsilon^\text{PPT}_\text{max}(2\vert4) \approx 0.47$, and
$\varepsilon^\text{PPT}_\text{max}(3\vert3) \approx 0.40$,
already at a maximum moment order of $m = 7$ for the bipartitions $1\vert5$ and $2\vert4$, and even at $m = 5$ for the bipartition $3\vert3$.
These values are shown as black reference lines in Fig.~\ref{fig:ame_loc_depol_thresholds}.
Similarly,  for $m=5$ the Stieltjes thresholds
%, approximately $0.490$ and $0.461$ for $1\vert5$ and $2\vert4$, respectively, 
almost reach the corresponding values of $\varepsilon^\text{PPT}_\text{max}$, while dispensing with the need to measure $p_6$ and $p_7$.
For the Stieltjes-$3$ criterion, also known as the $p_3$-PPT criterion, the thresholds exhibit a more pronounced dependence on the bipartition.
% We find 
% $\varepsilon^{p_3\text{-PPT}}_\text{max}(1\vert5) \approx 0.224$,
% $\varepsilon^{p_3\text{-PPT}}_\text{max}(2\vert4) \approx 0.351$, and
% $\varepsilon^{p_3\text{-PPT}}_\text{max}(3\vert3) \approx 0.389$.
In comparison, the subsystem purity thresholds, which we compute using Shor--Laflamme enumerators~\cite{miller_experimental_measurement_2024},
% are given by 
% $\varepsilon^\text{purity}_\text{max}(1\vert5) \approx 0. $,
% $\varepsilon^\text{purity}_\text{max}(2\vert4) \approx 0. $, and
% $\varepsilon^\text{purity}_\text{max}(3\vert3) \approx 0. $,
are shown by the purple reference lines in Fig.~\ref{fig:ame_loc_depol_thresholds}.
In particular, the subsystem purity criterion is stronger than the $p_3$-PPT criterion for the bipartition $1\vert5$, but not for $2\vert4$ and $3\vert3$.
Finally, the fidelity of $\mathcal{D}^{\otimes n}_\varepsilon[\Psi_\text{AME}]$ drops below $50\,\%$ at $\varepsilon^\text{fidelity}_\text{max} \approx 0.145$ (orange reference line in Fig.~\ref{fig:ame_loc_depol_thresholds}).
In summary, the thresholds generally increase with the experimental difficulty of each criterion, with the exception that $\varepsilon^{p_3\text{-PPT}}_\text{max}(1\vert5) < \varepsilon^\text{purity}_\text{max}(1\vert5)$.

Although, a posteriori, there is not much gain from a practical perspective,
our weight enumerator framework for PT moments enables us to investigate the performance of the Descartes-$m$ criterion and the $(k,l,m)$-PPT criterion from Thm.~\ref{thm:ppt_relaxation_hoelder}.
It has been pointed out that the threshold of the Descartes-$m$ criterion is strictly monotonically increasing in $m$, reproducing the abstract PPT criterion for $m=2^n = 64$~\cite{neven_symmetry_resolved_2021}.
In Fig.~\ref{fig:ame_loc_depol_thresholds}, we observe the predicted monotonic behavior 
(red circles). 
However, the threshold increases only very slowly and does not reach $\varepsilon^\text{PPT}_\text{max}$ even at $m = 30$, whereas the Stieltjes-$m$ criterion already achieves this much earlier.
For completeness, we also plot the thresholds for the $(m-2, m-1, m)$-PPT criterion (green squares).
Interestingly, we observe that the behavior of the curve strongly depends on the bipartition.
For $2\vert4$, the $(m-2,m-1,m)$-PPT threshold grows from $m = 3$ to $m = 5$, before slowly decreasing for larger values of $m$.
For $3\vert3$, on the other hand, the threshold decreases rapidly and monotonically over the entire displayed range.
For $1\vert5$, the green squares show a qualitatively different dependence on $m$: they increase rapidly and plateau around $\varepsilon_\text{max} \approx 0.32$.
The practically most relevant case is $m = 5$, for which we obtain $\varepsilon^{(3,4,5)\text{-PPT}}_\text{max} \approx 0.367$, lying clearly between $\varepsilon^{p_3\text{-PPT}}_\text{max}$ and $\varepsilon^{\text{Stieltjes-}5}_\text{max}$.
For effective per-qubit error rates of $\varepsilon \in (0.22, 0.37)$, 
the $(3,4,5)$-PPT criterion may therefore be preferred in this regime,  as the Stieltjes-$5$ criterion additionally requires measuring the purity $p_2$, whereas the $p_3$-PPT criterion is not strong enough to detect entanglement.

%Overall, these results reveal a rich and bipartition-dependent structure of entanglement  noise thresholds, highlighting that even simple moment-based criteria can capture subtle features of multipartite quantum correlations.

\section{Conclusion} 

In this work, we have taken steps to make entanglement detection
more practical and applicable to real-life experimental platforms. Concretely, we have
%we 
introduced a conceptually simple generalization of the $p_3$-PPT entanglement criterion (Thm.~\ref{thm:ppt_relaxation_hoelder}), which we call the  $(k,l,m)$-PPT criterion.
This approach allows one to compare arbitrary triples of partial transpose (PT) moments, rather than requiring access to all moments up to order $m$, as in previous approaches.
Then, we have generalized previous lower bounds on the logarithmic negativity, which solely rely on a few PT 
moments (Thm.~\ref{thm:quantitative_bounds}).
While completeness of moment-based criteria has been discussed using Descartes-type conditions (based on Newton identities), it has not been established whether the Stieltjes hierarchy (Hankel positivity) alone is complete at finite order. 
Our result (Thm.~\ref{thm:stieltjes_iff_ppt}) shows that for spectra with $N$ distinct eigenvalues, 
the Stieltjes conditions already detect NPT at order $2N-1$.
As important consequences, we showed that for globally depolarized stabilizer states and locally depolarized GHZ states, respectively, the PPT criterion is equivalent to the    Stieltjes-$m$ criterion with $m=5$ (Cor.~\ref{cor:p5_ppt_suffices_for_global_noise}) and $m = 2n + 5$ (Cor.~\ref{cor:stieltjes_for_locally_depolarized_ghz_states}).
Through a detailed analysis of locally depolarized GHZ states, we identified optimal choices of moment orders and showed that our $(3,4,5)$-PPT criterion achieves performance comparable to the more demanding Stieltjes-$5$ criterion. 
For carefully selected triples, we furthermore found that the $(k,l,m)$-criteria retain the favorable noise-scaling behavior of the Stieltjes-$m$ criterion (even if $m$ is constant), outperforming traditional approaches like fidelity- and purity-based methods in relevant regimes.
In the second part of this paper, we introduced a new notion of quantum weight enumerators as a unifying framework to describe the decay of PT moments under local depolarizing noise, providing both analytical insight and computational advantages.
Overall, our results highlight that meaningful relaxations of the PPT criterion can be achieved with fewer experimental resources than previously expected, while maintaining strong detection capabilities. This opens promising avenues for scalable entanglement certification in near-term quantum devices, particularly in regimes where access to certain high-order moments is possible but expensive.

\section{Outlook}

For the examples analyzed in this work, we find that---for a given maximal moment order---the Stieltjes criterion consistently outperforms the Descartes criterion. 
It would be interesting to determine whether this holds in general, either by proving its superiority or by constructing a counterexample.
On the computational side, recent tensor-network methods have been developed to efficiently compute Shor--Laflamme weight enumerator distributions~\cite{cao_quantum_lego_2024, cao_quantum_weight_2024, braccia_computing_exact_2024, pato_planqtn_a_2025}.
A complementary line of work leverages generating-function techniques for their analytical computation~\cite{vallee_sector_length_2026, goodenough_black_white_2026}. 
It is conceivable that some of these approaches could be extended to compute the weight enumerators introduced here 
{\add for certain families of structured quantum states.
For example, we believe that this could be achieved for one-dimensional cluster states, whose second-order enumerators can be computed in various ways~\cite{miller_shor_laflamme_2023, vallee_sector_length_2026}.}
From a theoretical perspective, further investigation of the structure of our new quantum weight enumerators and their relation to other families of quantum invariants appears particularly promising.
A deeper understanding of their algebraic properties may not only enable more efficient evaluation of higher-order moments, but also lead to new and more powerful classes of entanglement criteria. 
Finally, extending our framework to more general noise models could provide additional insight into the robustness of moment-based PPT relaxations and their applicability in large-scale quantum systems.
{\add For example, it would be worthwhile to extend our error analysis from phenomenological to circuit-level noise.
}

% {\red 

% \section*{Further ideas}
% \begin{itemize}
%     \item Man kann ja auch ein SDP einer unvollständigen Hankel-Matrix laufen lassen, um negative Eigenwerte zu zertifizieren.
%     \item Replace Tr[ $\rho W_i$]= $c_i$ by Tr[$(\rho^\Gamma)^k$ ] = $p_k$ in \href{https://arxiv.org/abs/quant-ph/0607167}{Eisert et al}.
% \end{itemize}
% }

\begin{acknowledgments}  
The authors are thankful for stimulating discussions with
Lennart Bittel,
{\add Simon Burton,}
Jose Carrasco,
Philippe Faist, 
Eric J.~Kuehnke,
Kyano Levi,
Arthur Pesah,
and Louis Schatzki.
This work has been supported by the Quantum Flagship (Millenion and PasQuans2), the  BMFTR (RealistiQ, QSolid, MUNIQC-Atoms, DAQC, QuSol, Hybrid++, PasQuops), the Munich Quantum Valley (K-8), the BMWK (EniQmA), the QuantERA (HQCC), the Cluster of Excellence MATH+, the DFG (CRC 183 and SPP 2514:\,563402549), the Einstein Foundation (Einstein Research Unit on Quantum Devices), Berlin Quantum, and the ERC (DebuQC).
This research was sponsored by IARPA and the Army Research Office, under the Entangled Logical Qubits program, and was accomplished under Cooperative Agreement Number W911NF-23-2-0212. The views and conclusions contained in this document are those of the authors and should not be interpreted as representing the official policies, either expressed or implied, of IARPA, the Army Research Office, or the U.S. Government. The U.S. Government is authorized to reproduce and distribute reprints for Government purposes notwithstanding any copyright notation herein. 
\end{acknowledgments}

\appendix

\section{Gleason's theorem about weight enumerators} 
\label{app:gleasons_theorem}

In this appendix, we provide a modern formulation of Gleason's theorem  (1970) from invariant theory~\cite{gleason_weight_polynomials_1970}.
This clarifies the role of the absolutely maximally entangled (AME) state on six qubits,
which is defined in Eq.~\eqref{eq:ame_definition}.
Let $M_{i,j}$ be the coefficient of $x^i$ in the polynomial $(1+3x)^{n-j}(1-x)^j/2^n$.
The self-inverse matrix $M=(M_{i,j})_{i,j=0}^n$ is known as the quantum MacWilliams transformation~\cite{shor_quantum_analog_1997}, and it is diagonalized by $\tilde{T}=( (-1)^{j} M_{i,j})_{i,j=0}^n$~\cite{miller_experimental_measurement_2024}.
Denote by $\ker(M-\mathbbm 1)$ the $(+1)$-eigenspace of $M$, and similarly for $\tilde{T}$.
A pure state $\Psi=\ket{\psi}\!\bra{\psi}$ is said to be a Type-II state
if for all $P\in \{I,X,Y,Z\}^{\otimes n}$ with $\wt(P) =1$ mod $2$, it holds $\bra{\psi} P \ket{\psi} = 0$.
Clearly $\Psi$ is of Type-II iff $a_{2i-1}[\Psi]=0$ for all integer $i$.
Note that Type-II states can only exist if the number of qubits $n$ is even because, by Thm.~2 of Ref.~\cite{tran_quantum_entanglement_2015},
every pure state obeys $a_n[\Psi] \ge 1$.
Finally, we define $\Psi_0 = \ket{0}\!\bra{0}$,  $\Psi_\text{Bell} = \ket{\Phi^+}\!\bra{\Phi^+}$, and  $\Psi_\text{AME}=\ket{\text{AME}}\!\bra{\text{AME}}$,
where $\ket{\Phi^+} = (\ket{0,0}+\ket{1,1})/\sqrt{2}$ is the two-qubit Bell state.

\begin{theorem} \label{thm:gleason} (Gleason's theorem)
Let $\Psi = \ket{\psi}\!\bra{\psi}$ be a pure $n$-qubit state.
Then, its vector of Shor--Laflamme quantum weight enumerators, 
$\mathbf{a}[\Psi] = (a_0[\Psi], \ldots, a_n[\Psi])$, is contained in the $\lceil (n+1)/2\rceil$-dimensional subspace 
\begin{align} \label{eq:gleason1}
    \ker(M-\mathbbm 1) = \mathrm{span}_{\RR} \left\{ \mathbf{a}[\Psi_\mathrm{Bell}^{\otimes i}\otimes \Psi_{0}^{\otimes n-2i}] \ \vert \ 
    % i \in \{0,,\ldots, \left\lfloor \tfrac{n}{2}\right \rfloor\} 
    0 \le i \le \left\lfloor \tfrac{n}{2}\right \rfloor
     \right\} \, .
\end{align}
Moreover, if $\Psi$ is a Type-II state, then $\mathbf{a}[\Psi] $ %$ \in \RR^{n+1}$ 
is contained in the $\lceil (n+1)/6\rceil$-dimensional subspace 
\begin{align}      \label{eq:gleason2}
    \ker &(M-\mathbbm 1) \cap \ker(\tilde{T}-\mathbbm 1)  \\ 
 &=   \mathrm{span}_{\RR} \left \{ \mathbf{a}[\Psi_\mathrm{AME}^{\otimes i}\otimes \Psi_\mathrm{Bell}^{\otimes (n-6i)/2}] \ \vert \      0 \le i \le  \left\lfloor \tfrac{n}{6}\right \rfloor \right\} \, .
    \nonumber
\end{align}
%which has dimension $\lceil (n+1)/6\rceil$.
\end{theorem}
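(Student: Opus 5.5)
We work with the quantum MacWilliams identity and with explicit enumerators of product states; the proof then reduces to linear algebra on $\RR^{n+1}$.

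\textbf{First part.} For a pure state $\Psi$, every complementary pair of subsystems has equal purity, $\Tr[\Psi_A^2]=\Tr[\Psi_{A^c}^2]$. Expanding both purities in the Pauli basis and averaging over all $A$ of size $j$ yields the relation between the Shor--Laflamme enumerators $a_i$ and the dual (Rains) enumerators $b_i$, namely $\mathbf b = M\mathbf a$ up to the normalization built into $M$. For pure states one has $\mathbf{b}[\Psi]=\mathbf{a}[\Psi]$, so $\mathbf a[\Psi]\in\ker(M-\mathbbm 1)$.

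For the dimension and spanning set we proceed as follows.
\begin{itemize}
\item The matrix $M$ is the coefficient matrix of the substitution $(x,y)\mapsto((x+3y)/2,(x-y)/2)$ acting on homogeneous polynomials of degree $n$. Its square is the identity, and its trace can be computed from its eigenvalues $\pm1$.
\item The eigenvalues are found via the eigenvectors of the $2\times2$ matrix $\frac12\begin{bmatrix}1&3\\1&-1\end{bmatrix}$. These are $u=x+y$ with eigenvalue $1$ and $v=x-3y$ with eigenvalue $-1$.
\item The monomials $u^{n-j}v^{j}$ then diagonalize $M$ with eigenvalue $(-1)^j$. Counting even $j$ gives $\dim\ker(M-\mathbbm1)=\lfloor n/2\rfloor+1=\lceil (n+1)/2\rceil$.
\item Enumerators are multiplicative under tensor products. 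The one-qubit state $\Psi_0$ has $W=x+y$, and the Bell pair has $W=x^2+3y^2$.
\item The products $(x+y)^{n-2i}(x^2+3y^2)^i$ for $0\le i\le\lfloor n/2\rfloor$ are linearly independent, since dividing by $(x+y)^n$ gives distinct powers of $(x^2+3y^2)/(x+y)^2$. They also lie in the $+1$ eigenspace. A dimension count therefore yields equality with $\ker(M-\mathbbm 1)$.
\end{itemize}

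\textbf{Second part.} For Type-II states, $a_i=0$ for all odd $i$. Equivalently, the weight polynomial is invariant under $y\mapsto -y$, and $\tilde T$ (which is $M$ composed with this sign flip, as the paper notes) fixes $\mathbf a$. So $\mathbf a$ is invariant under the group generated by $M$ and $\operatorname{diag}((-1)^j)$. I identify this group and compute its invariant ring with Molien's formula, in the style of Gleason's original argument.

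The relevant invariant ring is $\RR[\,x^2+3y^2,\; y^2(x^2-y^2)^2\,]$, and I would verify this directly. The second generator matches $\Psi_{\rm AME}$ combined with the Bell pair, since $(x^2+3y^2)^3$ and $W_{\rm AME}=x^6+45x^2y^4+18y^6$ span the same degree-6 invariants. Up to normalization, $W_{\rm AME}$ equals $(x^2+3y^2)^3$ plus a multiple of $y^2(x^2-y^2)^2$.

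Degree-$n$ invariants are spanned by $(x^2+3y^2)^{n/2-3i}\,W_{\rm AME}^{\,i}$ for $0\le i\le\lfloor n/6\rfloor$. These are independent because the two generators are algebraically independent, which gives the count $\lfloor n/6\rfloor+1=\lceil (n+1)/6\rceil$ for even $n$.

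\textbf{Main obstacle.} The hard step is establishing that the intersection of the eigenspaces is \emph{exactly} this span. Molien's series for the relevant group of order $48$ (or the smaller group obtained after the odd-$n$ and sign reductions) should give $1/((1-t^2)(1-t^6))$. Computing it over the finite group generated by $\frac1{\sqrt2}\begin{bmatrix}1&\sqrt3\\1&-1\end{bmatrix}$-type matrices and the reflections is the place I expect to need care.

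If Molien's formula becomes unwieldy, I would instead count directly. In the $u,v$ eigenbasis, $\ker(M-\mathbbm1)$ is spanned by $u^{n-2j}v^{2j}$, and I would impose $y\mapsto-y$ invariance on this basis and solve the resulting linear conditions.
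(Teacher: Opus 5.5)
Your first part is correct, and it takes a more elementary route than the paper. The paper imports $\CC[x,y]^{\langle h\rangle}=\CC[x+y,\,x^2+3y^2]$ from Nebe--Rains--Sloane. You instead diagonalize $h$ by $u=x+y$, $v=x-3y$, which gives $\dim\ker(M-\mathbbm{1})=\lfloor n/2\rfloor+1$ outright, and your independent invariant products then exhaust that space. The purity step is simpler than you phrase it: the averaged identities $\sum_{\#A=j}\Tr[\Psi_A^2]=\sum_{\#A=n-j}\Tr[\Psi_A^2]$ read $W(x+\tfrac y2,\tfrac y2)=W(y+\tfrac x2,\tfrac x2)$ for $W=\sum_i a_i x^{n-i}y^i$. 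That is equivalent to $W(x,y)=W(\tfrac{x+3y}2,\tfrac{x-y}2)$, so no detour through $\mathbf b$ is needed.

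The gap is in the second part, at the step you flag yourself. Nothing in the proposal shows that every $\mathbf a\in\ker(M-\mathbbm{1})\cap\ker(\tilde T-\mathbbm{1})$ lies in the span, i.e.\ the upper bound $\dim\le\lfloor n/6\rfloor+1$. Asserting that the invariant ring is $\RR[x^2+3y^2,\,y^2(x^2-y^2)^2]$ is exactly this missing fact. It is the real content of Gleason's theorem, which the paper imports from Nebe--Rains--Sloane, Eq.~(7.6.23). As written, you only have ``$\supseteq$''.

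Your plan for the missing step also starts from the wrong group. Take $h=\tfrac12\left(\begin{smallmatrix}1&3\\1&-1\end{smallmatrix}\right)$ and $\phi=\mathrm{diag}(1,-1)$. The product $h\phi=\tfrac12\left(\begin{smallmatrix}1&-3\\1&1\end{smallmatrix}\right)$ has trace and determinant $1$, so it is a rotation of order $6$. Hence $\langle h,\phi\rangle$ is dihedral of order $12$ (the paper's $\mathrm{Sym}(3)\times C_2$), not of order $48$. The matrix $\tfrac1{\sqrt2}\left(\begin{smallmatrix}1&\sqrt3\\1&-1\end{smallmatrix}\right)$ you mention has $|\det|\neq 1$ and lies in no finite group. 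Your fallback does not reduce to a count either: in $(u,v)$ coordinates, $y\mapsto -y$ acts as $u\mapsto\tfrac{u+v}2$, $v\mapsto\tfrac{3u-v}2$, which mixes the basis $u^{n-2j}v^{2j}$.

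Either of the following closes the gap.
\begin{enumerate}[(a)]
\item $\langle h,\phi\rangle$ is a real reflection group of order $12$ containing $6$ reflections. By Chevalley--Shephard--Todd (or Molien over these $12$ elements), its basic invariant degrees are $2$ and $6$, and the Hilbert series is $1/((1-t^2)(1-t^6))$. Your $\lfloor n/6\rfloor+1$ independent products $(x^2+3y^2)^{n/2-3i}W_{\rm AME}^{\,i}$ then fill the degree-$n$ part.
\item Directly: take eigenforms $w,\bar w$ of $r=h\phi$ with eigenvalues $e^{\pm\mathrm{i}\pi/3}$. The $r$-invariants of degree $n$ are spanned by $w^a\bar w^{n-a}$ with $2a\equiv n\pmod 6$. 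Since $\phi r\phi=r^{-1}$, $\phi$ swaps the two eigenlines; it pairs $w^a\bar w^{n-a}$ with a multiple of $w^{n-a}\bar w^{a}$ and fixes $(w\bar w)^{n/2}$. This leaves exactly $\lfloor n/6\rfloor+1$ invariants for even $n$ and none for odd $n$. The latter is consistent with $(h\phi)^3=-\mathbbm{1}$, so the stated dimension presupposes even $n$.
\end{enumerate}

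Finally, the set equality also needs the converse of your first step, which is one line. With $D=\mathrm{diag}((-1)^j)$ one has $\tilde T=MD$, so $M\mathbf a=\mathbf a=\tilde T\mathbf a$ together with invertibility of $M$ gives $D\mathbf a=\mathbf a$.
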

\begin{proof}

{\add 

It is well known that $\mathbf{a}[\Psi]$ is a $(+1)$-eigenvector of the MacWilliams transformation~\cite{huber_some_ulams_2018, miller_experimental_measurement_2024}.
This shows ``$\supseteq$'' in Eq.~\eqref{eq:gleason1}.
For pure stabilizer states, this means that the stabilizer group $\mathcal{S} \subset \mathcal P_n = \{I,X,Y,Z\}^{\otimes n}$ (we ignore global phases) and its dual $\mathcal{S}^\perp = \{P \in \mathcal{P}_n \ |\ \forall S\in \mathcal{S}: SP = PS  \}$
are coinciding groups.
For pure Type-II stabilizer states $\Psi$, such as $\Psi_\text{AME}$ and $\Psi_\text{Bell}$,
the dual $\mathcal{S}^\perp$, in turn, coincides with the shadow $\tilde{\mathcal{S}} = \{ P \in \mathcal{P}_n \ | \ \forall S \in \mathcal{S}: SP = PS (-1)^{\wt(S)}\}$, see e.g., Prop.~2 in Ref.~\cite{miller_experimental_measurement_2024}.
Therefore, the shadow enumerators of $\Psi$ coincide with its dual Shor--Laflamme enumerators, which themselves coincide with $\mathbf{a}[\Psi]$.
This shows $\tilde{T} \mathbf{a}[\Psi] = \mathbf{a}[\Psi] = M\mathbf{a}[\Psi]$ and proves  ``$\supseteq$'' in Eq.~\eqref{eq:gleason2}.

To prove the converse inclusion in Eq.~\eqref{eq:gleason1}, 
let $\mathbf{a} \in \mathbb{R}^{n+1}$ be an arbitrary vector with $M\mathbf{a} = \mathbf{a}$.
Define the polynomial $f(x,y) = \sum_{i=0}^n a_i x^{n-i}y^{i} \in \CC [x,y]$.
Since $\mathbf{a}$ is invariant under the MacWilliams matrix $M$,
the polynomial $f$ is invariant under the variable substitution $(x,y) \mapsto (\frac{x+3y}{2}, \frac{x-y}{2})$~\cite{macwilliams_a_theorem_1962}.
In the language of invariant theory~\cite[Sec.~5.6]{nebe_self_dual_2006}, this means that $f$ is contained in the 
invariant ring $\CC[x,y]^{G_\text{I}}$,
% \begin{align}
%     \CC[x,y]^G = \{f \in \CC [x,y] \ \vert \ f^g = f\}
% \end{align}
where $G_\text{I} = \langle h\rangle \cong C_2$ and
\begin{align} \label{eq:macwilliams_symmetry}
    h = \frac{1}{2} \begin{bmatrix}
        1 & 3 \\ 1 & -1
    \end{bmatrix}  
    \in  \text{GL}_2(\CC)\, .
\end{align}
By \cite[Sec.~7.6.1]{nebe_self_dual_2006},
$G_\text{I}$ arises from the Clifford--Weyl group $\mathcal{C}(4^\text{H+})$~\cite[Def.~5.3.1]{nebe_self_dual_2006}
associated with the form ring representation $\rho(4^\text{H+})$~\cite[Def.~1.7.2]{nebe_self_dual_2006},
which defines %(classical)
Type-$4^\text{H+}$ codes~\cite[Def.~1.8.1]{nebe_self_dual_2006},
after symmetrization to Hamming weight enumerators~\cite[Rem.~after Def.~2.1.2]{nebe_self_dual_2006}.
Hereby, a Type-$4^\text{H+}$ code is an additive subgroup $C\subset \FF_4^n$ that is self-dual with respect to the trace-Hermitian inner product~\cite[Sec.~2.3.4]{nebe_self_dual_2006}.
In other words, the classical code $C$ can be identified with the stabilizer group $\mathcal{S}$ of a pure stabilizer state~\cite[Rem.~13.2.3 (a)]{nebe_self_dual_2006}.
By~\cite[Eqs.~(5.6.8), (7.6.4)]{nebe_self_dual_2006},
we have 
% \begin{align}
    $\CC[x,y]^{G_\text{I}} = \CC [\text{hwe}(i_1),  \text{hwe}(i_2)]$, 
% \end{align}
where $\text{hwe}(i_1) = x+y$ and $\text{hwe}(i_2) = x^2 +3y^2$
are the Hamming weight enumerator polynomials of the Type-$4^\text{H+}$ codes 
$i_1 =\{0,1\}\subset \FF_4$ and $i_2 = \{00,11, \omega\omega,  \bar\omega\bar\omega\}\subset \FF_4^2$~\cite[Eq.(2.4.24)]{nebe_self_dual_2006}.
In the terminology of quantum information theory, the codes $i_1$ and $i_2$ can be identified with the stabilizer groups of $\ket{0}$ and $\ket{\Phi^+}$.
In particular, the Shor--Laflamme weight enumerator polynomials of $\Psi_0$ and $\Psi_\text{Bell}$ form a good polynomial basis~\cite[Def.~5.6.1]{nebe_self_dual_2006}
of $\mathbb{C}[x,y]^{G_\text{I}}$. % = \CC [x+y, x^2 +3y^2]$ is 
Therefore, our previous observation $f \in \CC[x,y] ^{G_\text{I}}$ is equivalent to 
the vector $\mathbf{a}  $ being contained in the right-hand side of Eq.~\eqref{eq:gleason1}.

Finally, let us prove ``$\subseteq$'' in Eq.~\eqref{eq:gleason2}.
This time, let $\mathbf{a}\in \RR^{n+1}$ be a $(+1)$-eigenvector of both $M$ and $\tilde T$.
Again, let $f(x,y) = \sum_{i=0}^n a_i x^{n-i}y^i$ be the polynomial whose coefficients are given by $\mathbf{a}$.
As before, $f$ is invariant under the MacWilliams symmetry $h$ from Eq.~\eqref{eq:macwilliams_symmetry}.
Our assumption $\tilde{T}\mathbf{a} = M\mathbf{a}= \mathbf{a}$ furthermore implies $a_i = 0$ whenever $i$ is odd.
Therefore, the polynomial $f$ is also invariant under the variable transformation $(x,y) \mapsto (x, -y)$, which corresponds to the symmetry 
\begin{align} \label{eq:parity_symmetry_2}
    \phi = \begin{bmatrix}
        1 &0 \\ 0 &-1
    \end{bmatrix}    \in  \text{GL}_2(\CC)\, .
\end{align}
This shows that $f$ is contained in the invariant ring $\CC [x,y] ^{G_\text{II}}$, 
where $G_\text{II} = \langle h,\phi \rangle \cong \text{Sym(3)} \times  C_2$.
By~\cite[Sec.~7.6.3]{nebe_self_dual_2006}, 
$G_\text{II}$ arises from the Clifford--Weyl group $\mathcal{C}(4^\text{H+}_\text{II})$ associated with the form ring representation $\rho(4^\text{H+}_\text{II})$,
which defines Type-$4^\text{H+}_\text{II}$ codes~\cite[p.~50]{nebe_self_dual_2006},
after symmetrization to Hamming weight enumerators.
Hereby, a Type-$4^\text{H+}_\text{II}$ code is a Type-$4^\text{H+}$ code  $C\subset \FF_4^n$ with the additional property that, for every code word $c\in C$, the Hamming weight of $c$ is even~\cite[Rem.~2.3.1]{nebe_self_dual_2006}.
In our terminology, the code $C$ can be identified with the stabilizer group $\mathcal{S}$ of a pure Type-II state.
By~\cite[Eq.~(7.6.23)]{nebe_self_dual_2006}, we have 
$\CC[x,y]^{G_\text{II}} = \CC[\text{hwe}(i_2), \text{hwe}(h_6)]$,
where $\text{hwe}(i_2) = x^2 + 3y^2$ and $\text{hwe}(h_6)= x^6 + 45x^2y^4+ 18y^6$
%, respectively,
are the Hamming weight enumerator polynomials of the repetition code $i_2$~\cite[Eq.~(2.4.24)]{nebe_self_dual_2006} and the hexacode $h_6$~\cite[Eq.~(2.4.26)]{nebe_self_dual_2006}.
These classical codes can be identified with the stabilizer groups of $\ket{\Phi^+}$ and $\ket{\text{AME}}$.
Therefore, the Shor--Laflamme weight enumerator polynomials of $\Psi_\text{Bell}$ and $\Psi_\text{AME}$ form a good polynomial basis of $\CC[x,y]^{G_\text{II}}$, and $f \in \CC[x,y]^{G_\text{II}}$ implies that the vector $\mathbf{a}$ is contained in the right-hand side of Eq.~\eqref{eq:gleason2}, which was to be demonstrated.

}

% The theorem is a direct reformulation of Eqs.~(7.6.4) and~(7.6.23) of Ref.~\cite{nebe_self_dual_2006}. 
\end{proof}

Gleason’s Theorem is useful because it enables the reconstruction of the full vector of Shor--Laflamme enumerators for pure states and pure Type-II states, respectively, even when only $\lceil (n+1)/2\rceil$ and $\lceil (n+1)/6\rceil$ of its entries are known.
To achieve this, it is important to note that $\mathbf{a}[\rho\otimes \rho']$ can be computed from $\mathbf{a}[\rho]$ and $\mathbf{a}[\rho']$, see  Eq.~(21) in Ref.~\cite{wyderka_characterizing_quantum_2020}.
In this way, the basis vectors in Eqs.~\eqref{eq:gleason1} and~\eqref{eq:gleason2} can be directly computed from   $\mathbf{a}[\Psi_{0}] = (1/2,1/2)$,   $\mathbf{a}[\Psi_{\text{Bell}}] = (1/4,0,3/4)$, and $\mathbf{a}[\Psi_{\text{AME}}] = (1/64,0,0,0,45/64,0,18/64)$. 

{\add 
Having translated Gleason’s theorem into the language of modern quantum information theory, 
it would be unfortunate not to mention the corresponding translation of a theorem attributed to Gleason and Pierce~\cite{assmus_research_to_1967}.
This theorem is a no-go result for the existence of pure states $\Psi  $ whose Pauli expectation values $\Tr[\Psi P]$ vanish unless $\wt(P) = 0 \text{ mod }t$, where  $t\ge 3$ is an arbitrary integer.  
This is surprising because such states do exist for $t=2$ (the pure Type-II states).
Moreover, if the purity assumption is dropped, such states exist for all $t$, e.g.,  $\rho = (\mathbbm 1 + Z^{\otimes t})2^{-t}$.

\begin{theorem}(Gleason--Pierce Theorem)
Let $n,t \in \mathbb{N}$ with $n\ge 1$ and $t\ge 3$.
There is no pure $n$-qubit state $\Psi$ whose Shor--Laflamme enumerators obey $a_i[\Psi] = 0$ whenever $i$ is not an integer multiple of $t$.
\end{theorem}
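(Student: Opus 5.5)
The plan is to translate the hypothesis into an invariance property of the Shor--Laflamme weight enumerator polynomial and then show that no nonzero polynomial with the constraints forced by purity has it. Let $f(x,y)=\sum_{i=0}^n a_i[\Psi]\,x^{n-i}y^i$. By the first part of Thm.~\ref{thm:gleason}, $M\mathbf{a}[\Psi]=\mathbf{a}[\Psi]$, so $f$ is invariant under the MacWilliams substitution $h$ from Eq.~\eqref{eq:macwilliams_symmetry}. The hypothesis $a_i[\Psi]=0$ for $t\nmid i$ means that $f$ is also invariant under $d=\mathrm{diag}(1,\zeta)$ with $\zeta=e^{2\pi\imag/t}$. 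Hence $f$ is invariant under $G=\langle h,d\rangle\subset \mathrm{GL}_2(\CC)$. Purity also gives $a_0[\Psi]=2^{-n}\neq 0$ and $\sum_i a_i[\Psi]=\Tr[\Psi^2]=1$. Moreover, all coefficients are nonnegative and $f\neq c\,x^n$, since $x^n$ is not $h$-invariant.

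The first key step is to show that $G$ must be finite. I would pass to the projective line, with coordinate $u=y/x$. There $d$ acts as $u\mapsto\zeta u$, a rotation of order $t\ge3$ fixing only $0$ and $\infty$, and $h$ acts as the involution $u\mapsto(1-u)/(1+3u)$. The zero set $Z\subset\mathbb{P}^1$ of $f$ is finite, nonempty (because $n\ge1$) and $G$-invariant. It does not contain $u=0$, because $f(1,0)=a_0\neq0$. Since $h$ maps $\infty$ to $-1/3$, one checks that $Z$ avoids both fixed points of $d$ together with their $G$-images. I would then argue that a subgroup of $\mathrm{PGL}_2(\CC)$ containing an element of order $t\ge3$ and an involution that does not preserve $\{0,\infty\}$ either is infinite with only infinite orbits off a finite exceptional set, or else is one of the finite polyhedral groups $A_4$, $S_4$, $A_5$. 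Ruling out the infinite case is where I would use $Z\cap\{0,\infty,\ldots\}=\emptyset$. This reduces the problem to finitely many $t$, namely $t\in\{3,4,5\}$ with the image of $d$ of order $t$. Lifting to $\mathrm{GL}_2$ must keep $G$ finite, since $f$ is invariant and not merely semi-invariant. Comparing the traces $\Tr h=0$ and $\Tr d=1+\zeta$ with the possible lifts should further cut these cases down.

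The second step handles the surviving finite cases, which I expect to be only $t=3$ and $t=4$, via invariant theory exactly as in the proof of Thm.~\ref{thm:gleason}. I would compute the invariant ring $\CC[x,y]^G$, either from Molien's formula or by intersecting $\CC[x+y,\,x^2+3y^2]$ with the $d$-invariants. The aim is to show that every homogeneous invariant of degree $n$ violates one of the purity constraints: either $a_0\neq0$ together with $\sum a_i=1$, or nonnegativity of all coefficients. A further option is the bound $a_n[\Psi]\ge$ (const) cited from Ref.~\cite{tran_quantum_entanglement_2015}, combined with $t\mid n$ being forced when $a_n\neq0$. For instance, a low-degree generator such as a Hessian-type invariant of the tetrahedral/octahedral group will typically carry a negative coefficient or a vanishing $x^n$-term, and products of generators inherit this.

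I expect the main obstacle to be the first step, namely making rigorous that $\langle h,d\rangle$ is infinite (or has no admissible finite orbit) for all but a few $t$. What I would try first is an explicit computation of the commutator $hdhd^{-1}$ in $\mathrm{PGL}_2$. If it is loxodromic, meaning $|\Tr|^2/\det\notin[0,4]$, then it has exactly two fixed points and every other orbit is infinite. It would then remain only to check that these two fixed points cannot both lie in $Z$ compatibly with the $d$-orbit structure, which should force $t\le 4$ or yield a direct contradiction.
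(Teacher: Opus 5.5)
Your route differs from the paper's. The paper only notes that $f$ is invariant under $G_t=\langle h,\phi_t\rangle$, asserts that this group is infinite, and cites Assmus--Mattson--Turyn for $\CC[x,y]^{G_t}=\CC$; you instead try to derive this from the action on $\mathbb{P}^1$.

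Your infinite case is sound in substance. If $\bar G$ is infinite, at most two points of $\mathbb{P}^1$ have finite orbits, since three would give a finite invariant set of size $\ge 3$. These points must lie in $\{0,\infty\}$, because every other $\langle\bar d\rangle$-orbit has size $t\ge 3$. No nonempty subset of $\{0,\infty\}$ is $\bar h$-stable, since $\bar h(0)=1$ and $\bar h(\infty)=-1/3$. (You cannot claim $\infty\notin Z$, which would need $a_n\neq 0$, but you do not need it.)

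The genuine gap is the finite case. You never show that $\bar G$ is not $A_4$, $S_4$ or $A_5$ when $t\in\{3,4,5\}$. At that point, ``comparing traces should cut these cases down'' is the entire remaining content of the theorem. Your expectation that $t=3,4$ survive, to be handled by Molien series and positivity of the $a_i$, is wrong: those cases do not exist, and Molien's formula does not apply to the infinite groups that actually occur. Your fallback of finding a loxodromic commutator also cannot work. We have $h^{\mathrm T}\mathrm{diag}(1,3)\,h=\mathrm{diag}(1,3)=d^\ast\mathrm{diag}(1,3)\,d$, so $G$ is conjugate into $U(2)$ and has no loxodromic elements.

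The missing computation is short. Take the $SL_2(\CC)$ lifts $\tilde h=h/\imag$ and $\tilde d=\mathrm{diag}(e^{-\pi\imag/t},e^{\pi\imag/t})$. Then $\Tr(\tilde h\tilde d)=-\sin(\pi/t)$, and the commutator has trace $2-3\sin^2(\pi/t)\in(-2,2)$, confirming it is elliptic. Every element of the binary tetrahedral, octahedral and icosahedral groups has trace in $\{0,\pm1,\pm\sqrt2,\pm\tau,\pm\tau^{-1},\pm2\}$ with $\tau=(1+\sqrt5)/2$. By contrast, $\sin(\pi/t)\approx 0.866,\,0.707,\,0.588$ for $t=3,4,5$, none of which is in that set.

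Hence $\bar G$ is infinite for every $t\ge 3$, and the finite case is empty. Applying your first-step argument to an arbitrary nonconstant homogeneous invariant then gives $\CC[x,y]^{G}=\CC$, exactly the fact the paper imports. Since $f$ is homogeneous of degree $n\ge 1$ with $a_0=2^{-n}\neq 0$, it cannot be invariant. Nonnegativity, normalization and bounds on $a_n$ play no role. With this repair, your argument becomes a self-contained proof of the cited invariant-theory fact, at the cost of using the classification of finite subgroups of $PGL_2(\CC)$.
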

\begin{proof}
    Assume such a state $\Psi$ exists and denote its weight enumerator polynomial by 
    $f(x,y) = \sum_{i=0}^n a_i[\Psi] x^{n-i}y^i$.
    Since every coefficient with $i\neq0 \text{ mod } t$ is vanishing,
    the polynomial $f$ is invariant under the variable transformation $(x,y) \mapsto (1, \omega)$ with $\omega = \exp(2\pi \texttt{i} / t)$.
    This transformation corresponds to the symmetry
    \begin{align}
       \phi_t = \begin{pmatrix}
           1 & 0 \\ 0 & \omega
       \end{pmatrix} \, ,
    \end{align}
    which generalizes Eq.~\eqref{eq:parity_symmetry_2}.
    Since $\Psi$ is pure, $f$ is also invariant under the MacWilliams symmetry of $h$ from Eq.~\eqref{eq:macwilliams_symmetry}.
    Therefore, $f$ is contained in the invariant ring $\mathbb{C}[x,y]^{G_t}$
    of the infinite group $G_t = \langle h, \phi_t\rangle$.
    However, $\mathbb{C}[x,y]^{G_t} = \mathbb C$ is trivial~\cite[Part~XI]{assmus_research_to_1967}.
    Therefore $a_i[\Psi] = \delta_{i,0}2^{-n}$, which contradicts that $\Psi$ is pure and finishes the proof.
\end{proof}

}

\bibliography{references}

@article{arute_quantum_supremacy_2019,
  author  = {Arute, Frank and others},
  title   = {Quantum supremacy using a programmable superconducting processor},
  journal = {Nature},
  year    = {2019},
  volume  = {574},
  number  = {7779},
  pages   = {505--510},
  doi     = {10.1038/s41586-019-1666-5},
  url     = {https://doi.org/10.1038/s41586-019-1666-5},
  issn    = {1476-4687}
}

@article{kim_evidence_for_2023,
  author  = {Kim, Youngseok and others},
  title   = {Evidence for the utility of quantum computing before fault tolerance},
  journal = {Nature},
  year    = {2023},
  volume  = {618},
  number  = {7965},
  pages   = {500--505},
  doi     = {10.1038/s41586-023-06096-3}
}

@article{bluvstein_logical_quantum_2024,
  author  = {Bluvstein, Dolev and others},
  title   = {Logical quantum processor based on reconfigurable atom arrays},
  journal = {Nature},
  year    = {2024},
  volume  = {626},
  number  = {7997},
  pages   = {58--65},
  doi     = {10.1038/s41586-023-06927-3}
}

@article{acharya_quantum_error_2025,
  author  = {Acharya, Rajeev and others},
  title   = {Quantum error correction below the surface code threshold},
  journal = {Nature},
  year    = {2025},
  volume  = {638},
  number  = {8052},
  pages   = {920--926},
  doi     = {10.1038/s41586-024-08449-y}
}

@article{werner_quantum_states_1989,
  title = {{Quantum states with Einstein-Podolsky-Rosen correlations admitting a hidden-variable model}},
  author = {Werner, Reinhard F.},
  journal = {Phys. Rev. A},
  volume = {40},
  issue = {8},
  pages = {4277--4281},
  numpages = {0},
  year = {1989},
  month = {Oct},
  publisher = {American Physical Society},
  doi = {10.1103/PhysRevA.40.4277},
  url = {https://link.aps.org/doi/10.1103/PhysRevA.40.4277}
}

@article{horodecki_violating_bell_1995,
title = {{Violating Bell inequality by mixed spin-1/2 states: necessary and sufficient condition}},
journal = {Phys. Lett. A},
volume = {200},
number = {5},
pages = {340-344},
year = {1995},
issn = {0375-9601},
doi = {https://doi.org/10.1016/0375-9601(95)00214-N},
url = {https://www.sciencedirect.com/science/article/pii/037596019500214N},
author = {R. Horodecki and P. Horodecki and M. Horodecki}
}

@article{simon_robustness_of_2002,
  title = {Robustness of multiparty entanglement},
  author = {Simon, Christoph and Kempe, Julia},
  journal = {Phys. Rev. A},
  volume = {65},
  issue = {5},
  pages = {052327},
  numpages = {4},
  year = {2002},
  month = {May},
  publisher = {American Physical Society},
  doi = {10.1103/PhysRevA.65.052327},
  url = {https://link.aps.org/doi/10.1103/PhysRevA.65.052327}
}

@article{liu_decay_of_2009,
  title = {Decay of multiqudit entanglement},
  author = {Liu, Zhao and Fan, Heng},
  journal = {Phys. Rev. A},
  volume = {79},
  issue = {6},
  pages = {064305},
  numpages = {4},
  year = {2009},
  month = {Jun},
  publisher = {American Physical Society},
  doi = {10.1103/PhysRevA.79.064305},
  url = {https://link.aps.org/doi/10.1103/PhysRevA.79.064305}
}

@book{bhatia_matrix_analysis_1997,
  author    = {Bhatia, Rajendra},
  title     = {Matrix Analysis},
  series    = {Grad. Texts Math.},
  volume    = {169},
  publisher = {Springer},
  address   = {NY},
  year      = {1997},
  edition   = {1},
  doi       = {10.1007/978-1-4612-0653-8},
  isbn      = {978-0-387-94846-1},
  ean       = {9780387948461},
  issn      = {0072-5285},
  eissn     = {2197-5612},
}

@book{terras_fourier_analysis_1999,
    place={Cambridge},  
    title={{Fourier Analysis on Finite Groups and Applications}}, publisher={CUP},
    author={Terras, Audrey},
    year={1999}
}

@article{freedman_experimental_test_1972,
  title = {{Experimental test of local hidden-variable theories}},
  author = {Freedman, Stuart J. and Clauser, John F.},
  journal = {Phys. Rev. Lett.},
  volume = {28},
  issue = {14},
  pages = {938--941},
  numpages = {0},
  year = {1972},
  month = {Apr},
  publisher = {American Physical Society},
  doi = {10.1103/PhysRevLett.28.938},
  url = {https://link.aps.org/doi/10.1103/PhysRevLett.28.938}
}

@article{aspect_experimental_realization_1982,
  title = {{Experimental Realization of Einstein-Podolsky-Rosen-Bohm Gedankenexperiment: A new violation of Bell's inequalities}},
  author = {Aspect, Alain and Grangier, Philippe and Roger, G\'erard},
  journal = {Phys. Rev. Lett.},
  volume = {49},
  issue = {2},
  pages = {91--94},
  numpages = {0},
  year = {1982},
  month = {Jul},
  publisher = {American Physical Society},
  doi = {10.1103/PhysRevLett.49.91},
  url = {https://link.aps.org/doi/10.1103/PhysRevLett.49.91}
}

@article{weihs_violation_of_1998,
  title = {{Violation of Bell's Inequality under strict Einstein locality conditions}},
  author = {Weihs, Gregor and others},
  journal = {Phys. Rev. Lett.},
  volume = {81},
  issue = {23},
  pages = {5039--5043},
  numpages = {0},
  year = {1998},
  month = {Dec},
  publisher = {American Physical Society},
  doi = {10.1103/PhysRevLett.81.5039},
  url = {https://link.aps.org/doi/10.1103/PhysRevLett.81.5039}
}

@article{hensen_loophole_free_2015,
  author  = {Hensen, B. and others},
  title   = {{Loophole-free Bell inequality violation using electron spins separated by 1.3 kilometres}},
  journal = {Nature},
  year    = {2015},
  volume  = {526},
  number  = {7575},
  pages   = {682--686},
  doi     = {10.1038/nature15759},
  url     = {https://doi.org/10.1038/nature15759}
}

@article{storz_loophole_free_2023,
  author  = {Storz, Simon and others},
  title   = {{Loophole-free Bell inequality violation with superconducting circuits}},
  journal = {Nature},
  year    = {2023},
  volume  = {617},
  number  = {7960},
  pages   = {265--270},
  doi     = {10.1038/s41586-023-05885-0},
  url     = {https://doi.org/10.1038/s41586-023-05885-0}
}

@misc{fattal_entanglement_in_2004,
      title={Entanglement in the stabilizer formalism}, 
      author={David Fattal and others},
      year={2004},
      eprint={quant-ph/0406168},
      archivePrefix={arXiv},
}

@article{guehne_entanglement_detection_2009,
title = {Entanglement detection},
journal = {Physics Rep.},
volume = {474},
number = {1},
pages = {1-75},
year = {2009},
issn = {0370-1573},
doi = {https://doi.org/10.1016/j.physrep.2009.02.004},
url = {https://www.sciencedirect.com/science/article/pii/S0370157309000623},
author = {O. Gühne and G. Tóth},
}

@article{bourennane_experimental_detection_2004,
  title = {{Experimental Detection of Multipartite Entanglement using Witness Operators}},
  author = {Bourennane, Mohamed and  others },
  journal = {Phys. Rev. Lett.},
  volume = {92},
  issue = {8},
  pages = {087902},
  numpages = {4},
  year = {2004},
  month = {Feb},
  publisher = {American Physical Society},
  doi = {10.1103/PhysRevLett.92.087902},
  url = {https://link.aps.org/doi/10.1103/PhysRevLett.92.087902}
}

@article{huang_experimental_generation_2011,
  author  = {Huang, Yun-Feng and others},
  title   = {{Experimental generation of an eight-photon Greenberger--Horne--Zeilinger state}},
  journal = {Nat. Commun.},
  year    = {2011},
  volume  = {2},
  number  = {1},
  pages   = {546},
  doi     = {10.1038/ncomms1556},
  url     = {https://doi.org/10.1038/ncomms1556}
}

@article{zhang_experimental_greenberger_2015,
  title = {{Experimental Greenberger-Horne-Zeilinger-type six-photon quantum nonlocality}},
  author = {Zhang, Chao and others},
  journal = {Phys. Rev. Lett.},
  volume = {115},
  issue = {26},
  pages = {260402},
  numpages = {5},
  year = {2015},
  month = {Dec},
  publisher = {American Physical Society},
  doi = {10.1103/PhysRevLett.115.260402},
  url = {https://link.aps.org/doi/10.1103/PhysRevLett.115.260402}
}

@article{wang_18_qubit_2018,
  title = {{18-Qubit entanglement with six photons' three degrees of freedom}},
  author = {Wang, Xi-Lin and others},
  journal = {Phys. Rev. Lett.},
  volume = {120},
  issue = {26},
  pages = {260502},
  numpages = {5},
  year = {2018},
  month = {Jun},
  publisher = {American Physical Society},
  doi = {10.1103/PhysRevLett.120.260502},
  url = {https://link.aps.org/doi/10.1103/PhysRevLett.120.260502}
}

@article{thomas_efficient_generation_2022,
  author  = {Thomas, Philip and Ruscio, Leonardo and Morin, Olivier and Rempe, Gerhard},
  title   = {Efficient generation of entangled multiphoton graph states from a single atom},
  journal = {Nature},
  year    = {2022},
  volume  = {608},
  number  = {7924},
  pages   = {677--681},
  doi     = {10.1038/s41586-022-04987-5},
  url     = {https://doi.org/10.1038/s41586-022-04987-5}
}

@article{mooney_generation_and_2021,
doi = {10.1088/2399-6528/ac1df7},
url = {https://doi.org/10.1088/2399-6528/ac1df7},
year = {2021},
publisher = {IOP Publishing},
volume = {5},
number = {9},
pages = {095004},
author = {Mooney, Gary J. and others},
title = {{Generation and verification of 27-qubit Greenberger-Horne-Zeilinger states in a superconducting quantum computer}},
journal = {J. Phys. Commun.},
}

@article{cao_generation_of_2023,
    author = {S. Cao and others},
    title = {{Generation of genuine entanglement up to 51 superconducting qubits}},
    journal = {Nature},
    year = {2023},
    volume=619, 
    pages={738},
    url = {https://doi.org/10.1038/s41586-023-06195-1},
    doi = {10.1038/s41586-023-06195-1}
}

@article{moses_a_race_2023,
  title = {{A Race-Track Trapped-Ion Quantum Processor}},
  author = {Moses, S. A. and others},
  journal = {Phys. Rev. X},
  volume = {13},
  issue = {4},
  pages = {041052},
  numpages = {25},
  year = {2023},
  month = {Dec},
  publisher = {American Physical Society},
  doi = {10.1103/PhysRevX.13.041052},
  url = {https://link.aps.org/doi/10.1103/PhysRevX.13.041052}
}

@article{kam_characterization_of_2024,
  title = {Characterization of entanglement on superconducting quantum computers of up to 414 qubits},
  author = {Kam, John F. and others},
  journal = {Phys. Rev. Res.},
  volume = {6},
  issue = {3},
  pages = {033155},
  numpages = {23},
  year = {2024},
  month = {Aug},
  publisher = {American Physical Society},
  doi = {10.1103/PhysRevResearch.6.033155},
  url = {https://link.aps.org/doi/10.1103/PhysRevResearch.6.033155}
}

@misc{javadiabhari_big_cats_2025,
      title={Big cats: entanglement in 120 qubits and beyond}, 
      author={Ali Javadi-Abhari and others},
      year={2025},
      eprint={2510.09520},
      archivePrefix={arXiv},
}

@article{leibfried_creation_of_2005,
  author  = {Leibfried, D. and others},
  title   = {Creation of a six-atom {Schr{\"o}dinger} cat state},
  journal = {Nature},
  year    = {2005},
  volume  = {438},
  number  = {7068},
  pages   = {639--642},
  doi     = {10.1038/nature04251},
  url     = {https://doi.org/10.1038/nature04251}
}

@article{omran_generation_of_2019,
author = {A. Omran  and  others },
title = {{Generation and manipulation of Schrödinger cat states in Rydberg atom arrays}},
journal = {Science},
volume = {365},
number = {6453},
pages = {570-574},
year = {2019},
doi = {10.1126/science.aax9743},
URL = {https://www.science.org/doi/abs/10.1126/science.aax9743},
}

@article{peres_separability_criterion_1996,
  title = {{Separability criterion for density matrices}},
  author = {Peres, A.},
  journal = {Phys. Rev. Lett.},
  volume = {77},
  issue = {8},
  pages = {1413--1415},
  numpages = {0},
  year = {1996},
  publisher = {American Physical Society},
  doi = {10.1103/PhysRevLett.77.1413},
  url = {https://link.aps.org/doi/10.1103/PhysRevLett.77.1413}
}

@article{horodecki_separability_1996,
title = {{Separability of mixed states: necessary and sufficient conditions}},
journal = {Phys. Lett. A},
volume = {223},
number = {1},
pages = {1-8},
year = {1996},
issn = {0375-9601},
doi = {https://doi.org/10.1016/S0375-9601(96)00706-2},
url = {https://www.sciencedirect.com/science/article/pii/S0375960196007062},
author = {M. Horodecki and P. Horodecki and R. Horodecki},
}

@article{plenio_logarithmic_negativity_2005,
  title = {{Logarithmic Negativity: A Full entanglement monotone that is not convex}},
  author = {Plenio, M. B.},
  journal = {Phys. Rev. Lett.},
  volume = {95},
  issue = {9},
  pages = {090503},
  numpages = {4},
  year = {2005},
  month = {Aug},
  publisher = {American Physical Society},
  doi = {10.1103/PhysRevLett.95.090503},
  url = {https://link.aps.org/doi/10.1103/PhysRevLett.95.090503}
}

@article{macwilliams_a_theorem_1962,
  author    = {{\add F.~J.~MacWilliams}},
  title     = {{\add A theorem on the distribution of weights in a systematic code}},
  journal   = {The Bell System Tech. J.},
  volume    = {42},
  pages     = {79--94},
  year      = {1963},
  publisher = {Bell Telephone Laboratories, Inc.},
  doi       = {10.1002/j.1538-7305.1963.tb04003.x}
}

@article{shor_quantum_analog_1997,
  title = {{Quantum Analog of the MacWilliams Identities for Classical Coding Theory}},
  author = {Shor, Peter and Laflamme, Raymond},
  journal = {Phys. Rev. Lett.},
  volume = {78},
  issue = {8},
  pages = {1600--1602},
  numpages = {0},
  year = {1997},
  month = {Feb},
  publisher = {American Physical Society},
  doi = {10.1103/PhysRevLett.78.1600},
  url = {https://link.aps.org/doi/10.1103/PhysRevLett.78.1600}
}

@article{rains_quantum_weight_1998,
  author={Rains, E.M.},
  journal={IEEE Trans. Inf. Theory}, 
  title={Quantum weight enumerators}, 
  year={1998},
  volume={44},
  number={4},
  pages={1388-1394},
  doi={10.1109/18.681316}}

@article{rains_polynomial_invariants_2000,
  author={Rains, E.M.},
  journal = {IEEE Trans. Inf. Theory},
  title={Polynomial invariants of quantum codes}, 
  year={2000},
  volume={46},
  number={1},
  pages={54-59},
  doi={10.1109/18.817508}}

@article{grassl_computing_local_1998,
  title = {{\add Computing local invariants of quantum-bit systems}},
  author = {{\add M.~Grassl, M.~R\"otteler, and T.~Beth}},
  journal = {Phys. Rev. A},
  volume = {58},
  issue = {3},
  pages = {1833--1839},
  numpages = {0},
  year = {1998},
  month = {Sep},
  publisher = {American Physical Society},
  doi = {10.1103/PhysRevA.58.1833},
  url = {https://link.aps.org/doi/10.1103/PhysRevA.58.1833}
}

@article{rains_a_semidefinite_2001,
  author={Rains, E.M.},
  journal = {IEEE Trans. Inf. Theory},
  title={A semidefinite program for distillable entanglement}, 
  year={2001},
  volume={47},
  number={7},
  pages={2921-2933},
  doi={10.1109/18.959270}}

@article{aschauer_local_invariants_2004,
  author  = {H. Aschauer and J. Calsamiglia and M. Hein and H. J. Briegel},
  title   = {Local invariants for multi-partite entangled states allowing for a simple entanglement criterion},
  journal = {Quant. Inf. Comput. 4, 383},
  year    = {2004},
  url     = {https://arxiv.org/abs/quant-ph/0306048}
}

@article{cao_quantum_lego_2024,
  title = {{Quantum Lego expansion pack: Enumerators from tensor networks}},
  author = {Cao, C.-J. and Gullans, M. J. and Lackey, B. and Wang, Z.},
  journal = {PRX Quantum},
  volume = {5},
  issue = {3},
  pages = {030313},
  numpages = {43},
  year = {2024},
  month = {Jul},
  publisher = {American Physical Society},
  doi = {10.1103/PRXQuantum.5.030313},
  opturl = {https://link.aps.org/doi/10.1103/PRXQuantum.5.030313}
}

@ARTICLE{cao_quantum_weight_2024,
  author={Cao, C.-J. and Lackey, B.},
  journal={IEEE Trans. Inf. Theory}, 
  title={Quantum weight enumerators and tensor networks}, 
  year={2024},
  volume={70},
  number={5},
  pages={3512-3528},
  doi={10.1109/TIT.2023.3340503}}

@article{braccia_computing_exact_2024,
  author    = {P. Braccia and P. Bermejo and L. Cincio and M. Cerezo},
  title     = {Computing exact moments of local random quantum circuits via tensor networks},
  journal   = {Quantum Mach. Intell.},
  volume    = {6},
  number    = {2},
  pages     = {54},
  year      = {2024},
  issn      = {2524-4914},
  doi       = {10.1007/s42484-024-00187-8},
  url       = {https://doi.org/10.1007/s42484-024-00187-8}
}

@software{pato_planqtn_a_2025,
 author = {Pato, B. and Vanlerberghe, J. and Cao, C. and Lackey, B. and Brown, K.},
 title = {{PlanqTN, a Python library and interactive web app implementing the quantum LEGO framework}},
 year = 2025,
 publisher = {Zenodo},
 doi = {10.5281/zenodo.16761072},
 url = {https://doi.org/10.5281/zenodo.16761072},
}

@article{terhal_detecting_quantum_2002,
title={Detecting quantum entanglement},
author={B.M. Terhal},
journal={J. Theo. Comp. Sc.},
volume= {287},
pages={313-335},
year= {2002},
doi={10.48550/arXiv.quant-ph/0101032}
}

@article{guehne_detection_of_2002,
  title = {Detection of entanglement with few local measurements},
  author = {G\"uhne, O. and  others},
  journal = {Phys. Rev. A},
  volume = {66},
  issue = {6},
  pages = {062305},
  numpages = {5},
  year = {2002},
  month = {Dec},
  publisher = {American Physical Society},
  doi = {10.1103/PhysRevA.66.062305},
  url = {https://link.aps.org/doi/10.1103/PhysRevA.66.062305}
}

@INPROCEEDINGS{chen_exponential_separations_2022,
  author={Chen, S. and Cotler, J. and Huang, H.-Y. and Li, J.},
  booktitle={Proc. IEEE 62nd Annu. Symp. Found. Comput. Sci. (FOCS 2021)},
  title={{Exponential Separations Between Learning With and Without Quantum Memory}}, 
  year={2022},
  volume={},
  number={},
  pages={574},
  doi={10.1109/FOCS52979.2021.00063}
}

@article{eisert_quantum_certification_2020,
    title={Quantum certification and benchmarking},
    doi={10.1038/s42254-020-0186-4},
    volume= {2},
    issue = {7},
    pages={382},
    Author={J. Eisert and D. Hangleiter and N. Walk and I. Roth and D. Markham and R. Parekh and U. Chabaud and E. Kashefi},
    journal={Nature Rev. Phys.},
    year = {2020},
    url = {https://doi.org/10.1038/s42254-020-0186-4},
}

@article{elben_mixed_state_2020,
  title = {{Mixed-state entanglement from local randomized measurements}},
  author = {A. Elben and others},
  journal = {Phys. Rev. Lett.},
  volume = {125},
  issue = {20},
  pages = {200501},
  numpages = {6},
  year = {2020},
  publisher = {American Physical Society},
  doi = {10.1103/PhysRevLett.125.200501},
  url = {https://link.aps.org/doi/10.1103/PhysRevLett.125.200501}
}

@article{zhou_single_copies_2020,
  title = {{Single-Copies Estimation of Entanglement Negativity}},
  author = {Zhou, You and Zeng, Pei and Liu, Zhenhuan},
  journal = {Phys. Rev. Lett.},
  volume = {125},
  issue = {20},
  pages = {200502},
  numpages = {6},
  year = {2020},
  month = {Nov},
  publisher = {American Physical Society},
  doi = {10.1103/PhysRevLett.125.200502},
  url = {https://link.aps.org/doi/10.1103/PhysRevLett.125.200502}
}

@article{yu_optimal_entanglement_2021,
  title = {{Optimal entanglement certification from moments of the partial transpose}},
  author = {Yu, X.-D. and Imai, S. and G\"uhne, O.},
  journal = {Phys. Rev. Lett.},
  volume = {127},
  issue = {6},
  pages = {060504},
  numpages = {6},
  year = {2021},
  publisher = {American Physical Society},
  doi = {10.1103/PhysRevLett.127.060504},
  url = {https://link.aps.org/doi/10.1103/PhysRevLett.127.060504}
}

@article{liu_detecting_entanglement_2022,
  title = {{Detecting Entanglement in Quantum Many-Body Systems via Permutation Moments}},
  author = {Liu, Zhenhuan and others},
  journal = {Phys. Rev. Lett.},
  volume = {129},
  issue = {26},
  pages = {260501},
  numpages = {6},
  year = {2022},
  month = {Dec},
  publisher = {American Physical Society},
  doi = {10.1103/PhysRevLett.129.260501},
  url = {https://link.aps.org/doi/10.1103/PhysRevLett.129.260501}
}

@article{huber_some_ulams_2018,
doi = {10.1088/1751-8121/aadd1e},
url = {https://doi.org/10.1088/1751-8121/aadd1e},
year = {2018},
month = {sep},
publisher = {IOP Publishing},
volume = {51},
number = {43},
pages = {435301},
author = {{\add F.~Huber and S.~Severini}},
title = {{\add Some Ulam’s reconstruction problems for quantum states}},
journal = {J. Phys. A},
}

@article{carteret_noiseless_quantum_2005,
  title = {{\add Noiseless Quantum Circuits for the Peres Separability Criterion}},
  author = {{\add H.~A.~Carteret}},
  journal = {Phys. Rev. Lett.},
  volume = {94},
  issue = {4},
  pages = {040502},
  numpages = {4},
  year = {2005},
  month = {Jan},
  publisher = {American Physical Society},
  doi = {10.1103/PhysRevLett.94.040502},
  url = {https://link.aps.org/doi/10.1103/PhysRevLett.94.040502}
}

@article{carrasco_entanglement_phase_2024,
  title = {Entanglement phase diagrams from partial transpose moments},
  author = {Carrasco, Jose and others},
  journal = {Phys. Rev. A},
  volume = {109},
  issue = {1},
  pages = {012422},
  numpages = {19},
  year = {2024},
  month = {Jan},
  publisher = {American Physical Society},
  doi = {10.1103/PhysRevA.109.012422},
  url = {https://link.aps.org/doi/10.1103/PhysRevA.109.012422}
}

@misc{tarabunga_quantifying_mixed_2025,
      title={Quantifying mixed-state entanglement via partial transpose and realignment moments}, 
      author={Poetri Sonya Tarabunga and Tobias Haug},
      year={2025},
      eprint={2507.13840},
      archivePrefix={arXiv},
}

@misc{bradshaw_a_closed_2025,
      title={{A closed form for moment-based entanglement tests associated to the PPT criterion}}, 
      author={Zachary P. Bradshaw and Margarite L. LaBorde},
      year={2025},
      eprint={2503.17525},
      archivePrefix={arXiv},
}

@article{carvalho_decoherence_and_2004,
  title = {{Decoherence and multipartite entanglement}},
  author = {Carvalho, Andr\'e R. R. and Mintert, Florian and Buchleitner, Andreas},
  journal = {Phys. Rev. Lett.},
  volume = {93},
  issue = {23},
  pages = {230501},
  numpages = {4},
  year = {2004},
  publisher = {American Physical Society},
  doi = {10.1103/PhysRevLett.93.230501},
  url = {https://link.aps.org/doi/10.1103/PhysRevLett.93.230501}
}

@article{mintert_concurrence_of_2005,
    title = {{Concurrence of mixed multipartite quantum states}},
    author = {Mintert, F. and Ku\ifmmode \acute{s}\else \'{s}\fi{}, M. and Buchleitner, A.},
    journal = {Phys. Rev. Lett.},
    volume = {95},
    issue = {26},
    pages = {260502},
    numpages = {4},
    year = {2005},
    publisher = {American Physical Society},
    doi = {10.1103/PhysRevLett.95.260502},
    url = {https://link.aps.org/doi/10.1103/PhysRevLett.95.260502}
}

@article{mintert_observable_entanglement_2007,
  title = {{Observable entanglement measure for mixed quantum states}},
  author = {Mintert, F. and Buchleitner, A.},
  journal = {Phys. Rev. Lett.},
  volume = {98},
  issue = {14},
  pages = {140505},
  numpages = {3},
  year = {2007},
  publisher = {American Physical Society},
  doi = {10.1103/PhysRevLett.98.140505},
  url = {https://link.aps.org/doi/10.1103/PhysRevLett.98.140505}
}

@article{aolita_scalable_method_2008,
  title = {{Scalable method to estimate experimentally the entanglement of multipartite systems}},
  author = {Aolita, L. and Buchleitner, A. and Mintert, F.},
  journal = {Phys. Rev. A},
  volume = {78},
  issue = {2},
  pages = {022308},
  numpages = {4},
  year = {2008},
  publisher = {American Physical Society},
  doi = {10.1103/PhysRevA.78.022308},
  url = {https://link.aps.org/doi/10.1103/PhysRevA.78.022308}
}

@article{neven_symmetry_resolved_2021,
    author = {A. Neven and others},
    title = {{Symmetry-resolved entanglement detection using partial transpose moments}},
    year = {2021},
    journal = {npj Quant. Inf.},
    pages = {152},
    volume = {7},
    issue = {1},
    doi = {10.1038/s41534-021-00487-y}
}

@article{flammia_direct_fidelity_2011,
  title                    = {{Direct fidelity estimation from few Pauli measurements}},
  Author                   = {Flammia, S.T. and Liu, Y.-K.},
  Journal                  = {Phys. Rev. Lett.},
  Year                     = {2011},
  Pages                    = {230501},
  Volume                   = {106},
  Doi                      = {10.1103/PhysRevLett.106.230501},
  Issue                    = {23},
  Numpages                 = {4}
}

@article{islam_measuring_entanglement_2015,
    author = {R. Islam and others},
    year = {2015},
    title = {{Measuring entanglement entropy in a quantum many-body system}},
    journal = {Nature},
    pages = {77},
    volume = {528},
    issue = {7580},
    doi = {10.1038/nature15750}
}

@article{bovino_direct_measurement_2005,
  title = {{Direct measurement of nonlinear properties of bipartite quantum states}},
  author = {Bovino, Fabio Antonio and others},
  journal = {Phys. Rev. Lett.},
  volume = {95},
  issue = {24},
  pages = {240407},
  numpages = {4},
  year = {2005},
  month = {Dec},
  publisher = {American Physical Society},
  doi = {10.1103/PhysRevLett.95.240407},
  url = {https://link.aps.org/doi/10.1103/PhysRevLett.95.240407}
}

@article{kaufmann_quantum_thermalization_2016,
author = {Adam M. Kaufman  and others},
title = {Quantum thermalization through entanglement in an isolated many-body system},
journal = {Science},
volume = {353},
number = {6301},
pages = {794-800},
year = {2016},
doi = {10.1126/science.aaf6725},
URL = {https://www.science.org/doi/abs/10.1126/science.aaf6725},
}

@article{bluvstein_a_quantum_2022,
  author  = {Bluvstein, Dolev and others},
  title   = {A quantum processor based on coherent transport of entangled atom arrays},
  journal = {Nature},
  year    = {2022},
  volume  = {604},
  number  = {7906},
  pages   = {451--456},
  doi     = {10.1038/s41586-022-04592-6},
  url     = {https://doi.org/10.1038/s41586-022-04592-6}
}

@article{tran_quantum_entanglement_2015,
  title = {Quantum entanglement from random measurements},
  author = {Tran, Minh Cong and others},
  journal = {Phys. Rev. A},
  volume = {92},
  issue = {5},
  pages = {050301},
  numpages = {7},
  year = {2015},
  month = {Nov},
  publisher = {American Physical Society},
  doi = {10.1103/PhysRevA.92.050301},
  url = {https://link.aps.org/doi/10.1103/PhysRevA.92.050301}
}

@misc{tanizawa_simplest_fidelity_2023,
      title={Simplest fidelity-estimation method for graph states with depolarizing noise}, 
      author={Tomonori Tanizawa and others},
      year={2023},
      eprint={2304.10952},
      archivePrefix={arXiv},
}

@inproceedings{gleason_weight_polynomials_1970,
  author    = {Andrew Mattei Gleason},
  title     = {{Weight polynomials of self-dual codes and the MacWilliams identities}},
  booktitle = {Actes Congr. Internat. Math.},
  volume    = {3},
  pages     = {211--215},
  year      = {1970},
}

@book{nebe_self_dual_2006,
  title     = {{Self-Dual Codes and Invariant Theory}},
  author    = {Gabriele Nebe and Eric M. Rains and N. J. A. Sloane},
  series    = {Algorithms and Computation in Mathematics},
  volume    = {17},
  publisher = {Springer},
  year      = {2006},
  isbn      = {978-3-540-28451-5},
  doi       = {10.1007/3-540-30731-1},
  url       = {https://doi.org/10.1007/3-540-30731-1}
}

@article{laflamme_perfect_quantum_1996,
  title = {{Perfect quantum error correcting code}},
  author = {Laflamme, Raymond and Miquel, Cesar and Paz, Juan Pablo and Zurek, Wojciech Hubert},
  journal = {Phys. Rev. Lett.},
  volume = {77},
  issue = {1},
  pages = {198--201},
  numpages = {0},
  year = {1996},
  month = {Jul},
  publisher = {American Physical Society},
  doi = {10.1103/PhysRevLett.77.198},
  url = {https://link.aps.org/doi/10.1103/PhysRevLett.77.198}
}

@misc{miller_experimental_measurement_2024,
      title={Experimental measurement and a physical interpretation of quantum shadow enumerators}, 
  author = {{D. Miller et al.}},
      year={2024},
      eprint={2408.16914},
      archivePrefix={arXiv},
      primaryClass={quant-ph},
      url={https://arxiv.org/abs/2408.16914}, 
}

@article{wyderka_characterizing_quantum_2020,
doi = {10.1088/1751-8121/ab7f0a},
url = {https://doi.org/10.1088/1751-8121/ab7f0a},
year = {2020},
month = {jul},
publisher = {IOP Publishing},
volume = {53},
number = {34},
pages = {345302},
author = {Wyderka, N and Gühne, O},
title = {Characterizing quantum states via sector lengths},
journal = {J. Phys. A},
}

@article{miller_shor_laflamme_2023,
doi = {10.1088/1751-8121/ace8d4},
url = {https://doi.org/10.1088/1751-8121/ace8d4},
year = {2023},
publisher = {IOP Publishing},
volume = {56},
number = {33},
pages = {335303},
author = {Miller, Daniel and others},
title = {{Shor–Laflamme distributions of graph states and noise robustness of entanglement}},
journal = {J. Phys. A},
}

@article{kay_optimal_detection_2011,
  title = {{Optimal detection of entanglement in Greenberger-Horne-Zeilinger states}},
  author = {Kay, Alastair},
  journal = {Phys. Rev. A},
  volume = {83},
  issue = {2},
  pages = {020303},
  numpages = {4},
  year = {2011},
  month = {Feb},
  publisher = {American Physical Society},
  doi = {10.1103/PhysRevA.83.020303},
  url = {https://link.aps.org/doi/10.1103/PhysRevA.83.020303}
}

@article{horodecki_method_for_2002,
  title = {{Method for direct detection of quantum entanglement}},
  author = {Horodecki, P. and Ekert, A.},
  journal = {Phys. Rev. Lett.},
  volume = {89},
  issue = {12},
  pages = {127902},
  numpages = {4},
  year = {2002},
  publisher = {American Physical Society},
  doi = {10.1103/PhysRevLett.89.127902},
  url = {https://link.aps.org/doi/10.1103/PhysRevLett.89.127902}
}

@article{ekert_direct_estimation_2002,
  title = {{Direct estimations of linear and nonlinear functionals of a quantum state}},
  author = {Ekert, Artur K. and others},
  journal = {Phys. Rev. Lett.},
  volume = {88},
  issue = {21},
  pages = {217901},
  numpages = {4},
  year = {2002},
  month = {May},
  publisher = {American Physical Society},
  doi = {10.1103/PhysRevLett.88.217901},
  url = {https://link.aps.org/doi/10.1103/PhysRevLett.88.217901}
}

@article{leifer_measuring_polynomial_2004,
  title = {{Measuring polynomial invariants of multiparty quantum states}},
  author = {Leifer, M. S. and Linden, N. and Winter, A.},
  journal = {Phys. Rev. A},
  volume = {69},
  issue = {5},
  pages = {052304},
  numpages = {8},
  year = {2004},
  month = {May},
  publisher = {American Physical Society},
  doi = {10.1103/PhysRevA.69.052304},
  url = {https://link.aps.org/doi/10.1103/PhysRevA.69.052304}
}

@article{subasi_entanglement_spectroscopy_2019,
doi = {10.1088/1751-8121/aaf54d},
url = {https://doi.org/10.1088/1751-8121/aaf54d},
year = {2019},
month = {jan},
publisher = {IOP Publishing},
volume = {52},
number = {4},
pages = {044001},
author = {Subaşı, Yiğit and Cincio, Lukasz and Coles, Patrick J},
title = {Entanglement spectroscopy with a depth-two quantum circuit},
journal = {J. Phys. A},
}

@article{oszmaniec_measuring_relational_2024,
doi = {10.1088/1367-2630/ad1a27},
url = {https://doi.org/10.1088/1367-2630/ad1a27},
year = {2024},
month = {jan},
publisher = {IOP Publishing},
volume = {26},
number = {1},
pages = {013053},
author = {Oszmaniec, Michał and Brod, Daniel J and Galvão, Ernesto F},
title = {{Measuring relational information between quantum states, and applications}},
journal = {New J. Phys.},
}

@article{quek_multivariate_trace_2024,
  doi = {10.22331/q-2024-01-10-1220},
  url = {https://doi.org/10.22331/q-2024-01-10-1220},
  title = {{Multivariate trace estimation in constant quantum depth}},
  author = {Quek, Yihui and Kaur, Eneet and Wilde, Mark M.},
  journal = {{Quantum}},
  issn = {2521-327X},
  publisher = {{Verein zur F{\"{o}}rderung des Open Access Publizierens in den Quantenwissenschaften}},
  volume = {8},
  pages = {1220},
  month = jan,
  year = {2024}
}

@misc{vallee_sector_length_2026,
      title={{Sector length distributions of recursively definable graph states through analytic combinatorics}}, 
      author={Eloïc Vallée and Kenneth Goodenough and Paul E. Gunnells and Tim Coopmans and Jordi Tura},
      year={2026},
      eprint={2604.09766},
      archivePrefix={arXiv},
      primaryClass={quant-ph},
      url={https://arxiv.org/abs/2604.09766}, 
}

@misc{goodenough_black_white_2026,
      title={{\add Black-white polynomials of graphs and generating functions}}, 
      author={{\add K.~Goodenough and P.~E.~Gunnells}},
      year={2026},
      eprint={2604.10719},
      archivePrefix={arXiv},
      primaryClass={math.CO},
      url={https://arxiv.org/abs/2604.10719}, 
}

@article{ali_partial_transpose_2023,
  author  = {Ali, Mazhar},
  title   = {Partial transpose moments, principal minors and entanglement detection},
  journal = {Quantum Inf. Process.},
  year    = {2023},
  volume  = {22},
  number  = {5},
  pages   = {207},
  doi     = {10.1007/s11128-023-03966-7},
  url     = {https://doi.org/10.1007/s11128-023-03966-7},
  issn    = {1573-1332},
  date    = {2023-05-10}
}

@article{chen_a_matrix_2003,
  author  = {{\add K.~Chen and L.~Wu}},
  title   = {{\add A Matrix Realignment Method for Recognizing Entanglement}},
  journal = {Quantum Inf. Comput.},
  volume  = {3},
  number  = {3},
  pages   = {193},
  year    = {2003},
  doi = {10.48550/arXiv.quant-ph/0205017}
}

@article{rudolph_further_results_2005,
  author  = {{\add O.~Rudolph}},
  title   = {{\add Further Results on the Cross Norm Criterion for Separability}},
  journal = {Quantum Inf. Process.},
  volume  = {4},
  number  = {3},
  pages   = {219},
  year    = {2005},
  doi     = {10.1007/s11128-005-5664-1}
}

@article{horodecki_separability_of_2006,
  author  = {{\add M.~Horodecki, P.~Horodecki, and R.~Horodecki}},
  title   = {{\add Separability of Mixed Quantum States: Linear Contractions and Permutation Criteria}},
  journal = {Open Syst. Inf. Dyn.},
  year    = {2006},
  volume   = {13},
  number   = {1},
  pages    = {103--111},
  doi      = {10.1007/s11080-006-7271-8},
  url      = {https://doi.org/10.1007/s11080-006-7271-8},
  issn     = {1573-1324}
}

@techreport{assmus_research_to_1967,
  title={{\add Research to develop the algebraic theory of codes}},
  author={{\add E.~F.~Assmus, H.~F.~Mattson, and R.~J.~Turyn}},
  year={1967},
  institution = {\add Applied Research Laboratory, Sylvania Electronic Systems}
}

@INPROCEEDINGS{miller_graphstatevis_interactive_2021,
  author={{\add M.~Miller  and D.~Miller}},
  booktitle={Proc. IEEE Int. Conf. Quantum Comput. Eng.}, 
  title={{\add GraphStateVis: Interactive Visual Analysis of Qubit Graph States and their Stabilizer Groups}}, 
  year={2021},
  pages={378},
  doi={10.1109/QCE52317.2021.00057}}

@article{Roads, 
     title ={{Roads towards fault-tolerant universal quantum computation}},
author={{E.~T.~Campbell, B.~M.~Terhal, and C.~Vuillot}},
doi={10.1038/nature23460},
journal={Nature}, volume=549, pages={172-179}, year=2017}

@misc{MindTheGaps,
      title={Mind the gaps: The fraught road to quantum advantage},
      author={{J.~Eisert and J.~Preskill}}, 
      year={2025},
      eprint={2510.19928},
      archivePrefix={arXiv}
}

@article{PRXQuantum.2.010201,
  title = {Theory of Quantum System Certification},
  author = {{M.~Kliesch and I.~Roth}},
  journal = {PRX Quantum},
  volume = {2},
  issue = {1},
  pages = {010201},
  numpages = {53},
  year = {2021},
  month = {Jan},
  publisher = {American Physical Society},
  doi = {10.1103/PRXQuantum.2.010201},
  url = {https://link.aps.org/doi/10.1103/PRXQuantum.2.010201}
}

@article{PhysRevLett.134.210201,
  title = {Experimental Realization of Genuine Three-Copy Collective Measurements for Optimal Information Extraction},
  author = {Zhou, Kai and Yi, Changhao and Yan, Wen-Zhe and Hou, Zhibo and Zhu, Huangjun and Xiang, Guo-Yong and Li, Chuan-Feng and Guo, Guang-Can},
  journal = {Phys. Rev. Lett.},
  volume = {134},
  pages = {210201},
  numpages = {6},
  year = {2025},
  publisher = {American Physical Society},
  doi = {10.1103/PhysRevLett.134.210201} 
}

@article{Hangleiter,
  title = {{Direct certification of a class of quantum simulations}},
  doi={10.1088/2058-9565/2/1/015004},
  Author = {{D.~Hangleiter, M.~Kliesch, M.~Schwarz, and J.~Eisert}},
   journal={Quantum Sci. Technol.},
   volume= 2, 
   pages=015004 , 
   year=2017
}

@misc{PhD,
  Author                   = {J. Eisert},
  eprint={quant-ph/0610253},
      archivePrefix={arXiv},
    title ={Entanglement in quantum information theory},
  Year                     = {2001},
  note                    = {{PhD thesis, University of Potsdam}}
}

@Article{VidalNegativity,
  Author                   = {G. Vidal and R. F. Werner},
  Journal                  = {Phys. Rev. A},
    title ={Computable measure of entanglement},
  Year                     = {2002},
    DOI={10.1103/PhysRevA.65.032314},
  Pages                    = {032314},
  Volume                   = {65}
}

@Article{Audenaert06,
  title                     = {When are correlations quantum?},
  Author                   = {K. M. R. Audenaert and M. B. Plenio},
  Journal                  = {New J. Phys.},
  Year                     = {2006},
    doi={10.1088/1367-2630/8/11/266},
  Pages                    = {266},
  Volume                   = {8}
}

@article{Volume,
author={K.~Zyczkowski and P.~Horodecki and A.~Sanpera and M.~Lewenstein},
  title ={Volume of the set of separable states},
    doi = {10.1103/PhysRevA.58.883},
journal={Phys. Rev. A}, 
volume=58, 
pages=883, 
year=1998}

@Article{quant-ph/0607167,
  title                     = {Quantitative entanglement witnesses},
  Author                   = {J. Eisert and F. G. S. L. Brandao and K. M.R. Audenaert},
  Journal                  = {New J. Phys.},
  doi={10.1088/1367-2630/9/3/046},
  Year                     = {2007},
  pages                  = {46},
  Volume                   = {9}
}

\end{document}